\documentclass{article}

\usepackage{amsmath, amssymb, amsfonts, amsthm}
\usepackage{graphicx}
\usepackage{hyperref}
\usepackage{optidef}

\usepackage{xcolor}

\usepackage{enumerate}

\usepackage{subcaption}

\usepackage{authblk}

\begin{document}
\newcommand{\norm}[2]{\lVert{#2}\rVert_{#1}} 
\newcommand{\diagm}[1]{\mathrm{diag}\left(#1\right)} 
\newcommand{\normweighted}[3]{\lVert{#2}\rVert_{#1,\diagm{#3}}} 
\newcommand{\inv}{^{-1}} 
\newcommand{\R}{\mathbb{R}} 
\newcommand{\Rpos}{\mathbb{R}_{>0}} 
\newcommand{\Rnonneg}{\mathbb{R}_{\geq 0}} 
\newcommand{\sat}[1]{\mathrm{sat}\left({#1}\right)} 
\newcommand{\dz}[1]{\mathrm{dz}\left({#1}\right)} 
\newcommand{\sati}[1]{\mathrm{sat}_i\left({#1}\right)} 
\newcommand{\dzi}[1]{\mathrm{dz}_i\left({#1}\right)} 
\newcommand{\sign}[1]{\mathrm{sign}({#1})} 
\newcommand{\signi}[1]{\mathrm{sign}_i\left({#1}\right)} 
\newcommand{\ones}{\mathbf{1}} 
\newcommand{\pospart}[1]{\left[ #1 \right]_+}
\newcommand{\negpart}[1]{\left[ #1 \right]_-}

\newcommand{\eqrefbetween}[2]{\eqref{#1}--\eqref{#2}}

\newcommand{\todo}[1]{\textcolor{orange}{#1}}

\newcommand{\tsat}[1]{\Tilde{\mathrm{sat}}\left({#1}\right)} 
\newcommand{\tdz}[1]{\Tilde{\mathrm{dz}}\left({#1}\right)} 
\newcommand{\tsati}[1]{\Tilde{\mathrm{sat}}_i\left({#1}\right)} 
\newcommand{\tdzi}[1]{\Tilde{\mathrm{dz}}_i\left({#1}\right)} 
\newcommand*{\tz}{\Tilde{\zeta}}
\newcommand*{\tu}{\Tilde{u}}

\newcommand{\vu}{\overline{v}}
\newcommand{\vl}{\underline{v}}

\newcommand{\uu}{\overline{u}}
\newcommand{\ul}{\underline{u}}
\newcommand{\kp}{k^{\mathrm{P}}} \newcommand{\kpi}{k^{\mathrm{P}}_i} \newcommand{\KP}{{K^\mathrm{P}}}
\newcommand{\ki}{k^{\mathrm{I}}} \newcommand{\kii}{k^{\mathrm{I}}_i} \newcommand{\KI}{{K^\mathrm{I}}}
\newcommand{\ka}{k^{\mathrm{A}}} \newcommand{\kai}{k^{\mathrm{A}}_i} \newcommand{\KA}{{K^\mathrm{A}}}
\newcommand*{\kc}{k^\mathrm{C}}

\newtheorem{problem}{Problem}
\newtheorem{theorem}{Theorem}
\newtheorem{lemma}{Lemma}
\newtheorem{remark}{Remark}
\newtheorem{definition}{Definition}
\newtheorem{assumption}{Assumption}
\newtheorem{conjecture}{Conjecture}
\newtheorem{proposition}{Proposition}

\title{On PI-control in Capacity-Limited Networks}
\date{}

\author[1]{Felix Agner\thanks{Corresponding author: felix.agner@control.lth.se}}
\author[1]{Anders Rantzer}

\affil[1]{Department of Automatic Control, Lund University, Sweden}

\maketitle

\begin{abstract}
This paper concerns control of a class of systems where multiple dynamically stable agents share a nonlinear and bounded control-interconnection. The agents are subject to a disturbance which is too large to reject with the available control action, making it impossible to stabilize all agents in their desired states. In this nonlinear setting, we consider two different anti-windup equipped proportional-integral control strategies and analyze their properties. We show that a fully decentralized strategy will globally, asymptotically stabilize a unique equilibrium. This equilibrium also minimizes a weighted sum of the tracking errors. We also consider a light addition to the fully decentralized strategy, where rank-1 coordination between the agents is introduced via the anti-windup action. We show that any equilibrium to this closed-loop system minimizes the maximum tracking error for any agent. A remarkable property of these results is that they rely on extremely few assumptions on the interconnection between the agents. Finally we illustrate how the considered model can be applied in a district heating setting, and demonstrate the two considered controllers in a simulation.
\end{abstract}

\section{Introduction}
In this paper we consider control systems where a large number of interconnected agents share a limited resource, with the goal of utilizing this resource in an optimal fashion. This form of problem arises in many real-world domains: Communication networks \cite{kelly_rate_1998,low_optimization_1999,low_internet_2002,kelly_fairness_2003}, power systems \cite{opf_pursuit_DallAnese,molzahn_survey_2017,ortmann_deployment_2023}, building cooling systems \cite{kallesoe2019_hydronic_cooling,kallesoe2020_hydronic_cooling}, district heating and cooling networks \cite{AGNER2022100067}, and distributed camera systems \cite{martins_control-based_2020,martins_dynamic_2021}.

From a control-theoretic perspective, this family of problems poses several interesting challenges. Firstly, the multi-agent setting calls for control solutions which are distributed or decentralized to maintain scalability in large networks. Secondly, the nonlinearity imposed by the resource constraint means that a fully linear systems perspective will be insufficient. Thirdly, it is often the case that a detailed system model is difficult to obtain. Hence an explicit system model may be unavailable for control design. Finally, due to the constrained resource of the system, it is often impossible to drive the system to a preferable state for all agents. Hence it becomes interesting to analyze the optimality of any equilibrium stabilized by the closed-loop system.

Early works in this direction concerned with congestion in communication networks \cite{kelly_rate_1998,low_optimization_1999}. Since then, a larger body of literature has grown. An often-considered approach is to design the closed loop system to act as a gradient-descent algorithm \cite{krishnamoorthy_real-time_2022,hauswirth_optimization_2024}, in order to ensure optimality of the resulting equilibrium. This approach faces the challenge that the gradient of the steady-state map from input to equilibrium states needs to be known. Additionally, the resulting controller inherits the structure of this gradient, which may in general be dense. While works have been published in the directions of data-driven estimation of this gradient \cite{data-driven-feedback-optimization}, there are still major challenges in multi-agent and continuous-time settings. For specific problem-instances, asymptotically optimal control solutions with structural sparsity have been shown. For network flow-control, distributed solutions have been found which yield asymptotic optimality \cite{BAUSO_asymptotical_optimality,fair_network_flow_control}. For agents connected via a saturated, linear map, where the linear part corresponds to an M-matrix, fully decentralized and rank-1 coordinated control has been considered \cite{AGNER2024101049_decentralized_pi,AGNER2023_antiwindup_coordination_strategy}. These two works consider anti-windup-equipped proportional-integral control. Anti-windup has a long history of use in dynamic controllers for plants with input saturations, typically with the purpose of ensuring that the behavior of the controller in the saturated region does not drastically differ from the unsaturated behavior \cite{anti-windup-tutorial}. However, recent works have also shown that anti-windup has a useful application in real-time optimization \cite{aw_implementation_for_optimization} as it holds an interpretation of projection onto the feasibility set of the system. In \cite{martins_control-based_2020,martins_dynamic_2021}, an anti-windup-based controller is heuristically proposed and used to coordinate the allocation of a limited volume of disk space within a distributed camera system, informing the different cameras in the network of the current resource availability and thus improving the resource usage.

In this paper we present the following contributions. We study an extension of the model of capacity-constrained systems considered in \cite{AGNER2024101049_decentralized_pi,AGNER2023_antiwindup_coordination_strategy} to a fully nonlinear setting. We show that this extension to the nonlinear domain is crucial for modeling real world systems by explicitly demonstrating how the model can capture a district heating network. For the considered model, we consider the same two forms of controllers based on anti-windup-equipped PI control as considered in \cite{AGNER2024101049_decentralized_pi,AGNER2023_antiwindup_coordination_strategy}. Firstly a fully decentralized control structure, and secondly a structure which introduces light rank-1 coordination between the agents. We show that the results presented in \cite{AGNER2024101049_decentralized_pi,AGNER2023_antiwindup_coordination_strategy} still hold in a fully nonlinear setting. In particular, the fully decentralized controller globally, asymptotically stabilizes the system, and both of the considered controllers admit closed-loop equilibria which are optimal in the following ways: The fully decentralized controller minimizes a cost on the form $\sum a_i \nu_i |x_i|$, and the coordinated controller minimizes the largest control error $\norm{\infty}{x}$. 

We formally introduce the considered plant and problem formulation in Section \ref{sec:prolem formulation}. We present the two considered control strategies, along with their associated theoretical results on stability and optimality in Section \ref{sec:main results}. We demonstrate the applicability of the considered controllers in a motivating example based on district heating in Section \ref{sec:motivating example}, along with a simulation. In Section \ref{sec:proofs} we prove the main results of the paper and we finally conclude the paper in Section \ref{sec:conclusion}.

\subsection{Notation}
For a vector $v \in \R^n$, we denote $v_i$ to be element $i$ of $v$. We denote $\diagm{v}$ to be a diagonal matrix with the elements of the vector $v$ along its' diagonal. We denote $\Rnonneg$ ($\Rpos$) to be the set of non-negative (positive) numbers. If $v$ and $u$ are two vectors in $\R^n$, we say that $v \geq u$ ($v > u$) if $v-u \in \Rnonneg^n$ ($v-u \in \Rpos^n$). We denote $\pospart{v}$ to be the element-wise non-negative parts of the elements of $v$, such that if $u = \pospart{v}$, then $u_i = \max(v_i, 0)$. Conversely, $\negpart{v} = v - \pospart{v}$. We define the saturation function as $\sat{u}_i = \max\left(\underline{l}_i, \, \mathrm{min} \left( \overline{l}_i, \, u_i \right) \right)$. $\sat{\cdot}$ maps from $\R^n$ to a set $\mathcal{S} = \{ v \in \R^n \, | \, \underline{l}_i \leq v_i \leq \overline{l}_i, \, \forall i = 1, \dots, n \}$ defined by the bounds $\overline{l}$, $\underline{l}$. When $\sat{\cdot}$ is applied to an element of a vector, e.g., $\sat{u_i}$, the bounds $\overline{l_i}, \underline{l_i}$ are implicitly used. We define the dead-zone nonlinearity $\dz{u} = u - \sat{u}$. We denote the sign-function $\sign{x} = x / |x|$ when $x \neq 0$ and $\sign{x} = 0$ for $x=0$. When we apply $\sign{\cdot}$ to a vector, the operation is performed element-wise. 
For a vector $x \in \R^n$ we use the $l_1$-and-$l_\infty$-norms $\norm{1}{x} = \sum_{i=1}^n \left| x_i \right|$ and $\norm{\infty}{x} = \max_i \left| x_i \right|$ respectively. 

\section{Problem Formulation} \label{sec:prolem formulation}
In this section we first introduce the considered plant, and subsequently the associated control problem.

\subsection{Plant Description}
We consider the control of multi-agent systems where the dynamics of agent $i \in 1, \dots, n$ can be described by the following dynamics.
\begin{equation}
    \dot{x}_i = -a_i x_i + b_i\left(\sat{u}\right) + w_i \label{eq:agent dynamics}
\end{equation}
Here $x_i$ denotes the scalar state of agent $i$ which should be maintained close to 0. $a_i \in \R_{>0}$ models a stable internal behavior of agent $i$. $w_i \in \R$ is a disturbance acting on agent $i$, assumed to be constant.  $b_i$ is the $i$'th component of a nonlinear interconnection $b : \mathcal{S} \to \mathcal{B}$ between the agents. Here $\mathcal{S}$ is the range of the saturation function. We consider the case where $b$ is not explicitly known and hence cannot be used in control design and actuation. However, we assume that $b$ holds certain exploitable properties:
\begin{assumption} (Input-output properties of $b$) \label{ass:assumptions on b}
    $b : \mathcal{S} \to \mathcal{B}$ is a continuous function. There exists $\eta \in \R^n_{>0}$ such that for any pair $\vu$, $\vl$ $\in \mathcal{S}$ where $\vu \geq \vl$ and $\vu \neq \vl$,
    \begin{enumerate}[(i)]
        \item $b_i(\vu) - b_i(\vl) < 0$ if $\vu_i = \vl_i$, and
        \item $\eta^\top\left(b\left(\vu\right) - b\left(\vl\right)\right) > 0$. 
    \end{enumerate}
\end{assumption}
Assumption $(i)$ encodes competition between the agents: if other agents \textit{increase} their control action while agent $i$ maintains their control input ( $\vu_i = \vl_i$ and $\vu \geq \vl$), the resource granted to agent $i$ decreases ($b_i(\vu) - b_i(\vl) < 0$). Assumption $(ii)$ encodes that if all agents increase their system input ($\vu \geq \vl$), the output of the system increases ($\eta^\top\left(b\left(\vu\right) - b\left(\vl\right)\right) > 0$). This increase concerns a weighted output, governed by a weight $\eta$. If $b$ satisfies $(ii)$ for many different vectors $\eta > 0$, then the results of this paper hold for any such choice of $\eta$.

\begin{remark}
    Assumption \ref{ass:assumptions on b} is satisfied in the linear case when $b(v) = Bv$ and $B\in \R^{n \times n}$ is an M-matrix. $(i)$ then corresponds to the non-positivity of $B$'s off-diagonal elements. $B$ also has a positive left eigenvector $\eta$ with associated positive eigenvalue $\lambda$ such that $\eta^\top B = \lambda \eta^\top$, which implies $(ii)$. This is the case investigated in \cite{AGNER2024101049_decentralized_pi,AGNER2023_antiwindup_coordination_strategy}. We refer to \cite[pp. 113-115]{horn_johnson_1991} for a more detailed definition of M-matrices and a list of their properties.
\end{remark}

\begin{remark}
Note that $\mathcal{S}$ is an $n$-dimensional box and thus compact. As $b$ is continuous, $\mathcal{B}$ is therefore also compact due to the extreme value theorem.    
\end{remark}

\subsection{Problem Description}
In an ideal scenario, a controller should drive the system \eqref{eq:agent dynamics} to the origin ($x=0$), which means that there are no control errors. This is unforunately not always possible. The dynamics \eqref{eq:agent dynamics} dictate that any equilibrium state-input pair ($x^0, u^0$) yielding $\dot{x} = 0$ must satisfy $a_i x_i^0 = b_i(\sat{u^0}) + w_i$ for all $i = 1,\dots,n$. But when the disturbance $w$ is large, we may find that $-w \notin \mathcal{B}$ as the image $\mathcal{B}$ of $b$ is compact. Thus it becomes impossible to stabilize the origin. In this scenario, our aim is to design controllers which stabilize an equilibrium close to the origin, where we will consider two such notions of "close". The multi-agent setting also provides the complication that the controllers should require little to no communication. Furthermore, we have no explicit model of $b$, and can therefore not use it for control design or actuation.

\section{Considered Controllers and Main Results} \label{sec:main results}
In this section we will define two proportional-integral control strategies. In the unsaturated region, both controllers are equivalent and fully decentralized. In the saturated region they are equipped with different anti-windup compensators. One of these anti-windup compensators is fully \textit{decentralized} and the other is \textit{coordinating} using rank-1 communication. We will show how the closed-loop equilibria of these two strategies minimize the distance to the origin by two different metrics.

\subsection{Decentralized Control}

The first control strategy we investigate is also the simplest, namely the fully decentralized strategy. Each agent $i = 1, \dots, n$, is equipped with an integral error $z_i$, proportional and integral gains $\kpi \in \Rpos$ and $\kii \in \Rpos$, and an anti-windup gain $\kai \in \Rpos$. Their closed loop system is therefore described by
\begin{subequations}
\begin{align}
    \dot{x}_i &= -a_i x_i + b_i(\sat{u}) + w_i \label{eq:decentralized x}\\
    \dot{z}_i &= x_i + \kai \dzi{u} \label{eq:decentralized z}\\
    u_i &= -\kpi x_i - \kii z_i. \label{eq:decentralized u}
\end{align}
\label{eq:decentralized closed loop}
\end{subequations}

We assume that the controller gains of each agent are tuned according to the following rule.
\begin{assumption} \label{ass:decentralized controller tuning}
    For all agents $i = 1,\dots,n$, it holds that $\kpi a_i > \kii$ (the proportional gain dominates the integral gain) and $\kpi \kai < 1$ (the proportional gain and the anti-windup gain are limited). 
\end{assumption}
Note that this control strategy is fully decentralized not only in terms of actuation, but also in terms of Assumption \ref{ass:decentralized controller tuning}. The controller tuning also requires no explicit model of the interconnection $b$. For this closed-loop system, we present the following qualities, which we will later prove in Section \ref{sec:proofs}.

\begin{theorem}[Global asymptotic stability] \label{thm: decentralized stability}
    Let Assumptions \ref{ass:assumptions on b} and \ref{ass:decentralized controller tuning} hold. Then the closed loop system \eqref{eq:decentralized closed loop} formed by the decentralized controller has a unique, globally asymptotically stable equilibrium.
\end{theorem}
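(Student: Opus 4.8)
I would prove existence, global asymptotic stability, and uniqueness of the equilibrium in that order, with uniqueness falling out of the stability argument. It helps to trade the state $(x,z)$ for $(x,u)$: since $u_i=-\kpi x_i-\kii z_i$ with $\kii>0$ this is a linear change of coordinates, and eliminating $z$ from \eqref{eq:decentralized closed loop} gives
\begin{align*}
  \dot x_i &= -a_i x_i + b_i(\sat{u}) + w_i, \\
  \dot u_i &= (\kpi a_i - \kii)x_i - \kii\kai\,\dzi{u} - \kpi\bigl(b_i(\sat{u}) + w_i\bigr),
\end{align*}
where the anti-windup action appears as the \emph{decentralized} dead zone $\dzi{u}$, a monotone sector nonlinearity of $u_i$ alone; where $\kpi a_i-\kii>0$ and $\kpi\kai<1$ by Assumption \ref{ass:decentralized controller tuning}; and where all inter-agent coupling sits in the terms $b_i(\sat{u})$.

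\emph{Existence.} $\dot x=\dot u=0$ forces $a_i x_i^0=b_i(\sat{u^0})+w_i$ and $x_i^0=-\kai\,\dzi{u^0}$, hence $u_i^0=\sat{u_i^0}-(b_i(\sat{u^0})+w_i)/(a_i\kai)$; writing $v=\sat{u^0}$ this says exactly that $v$ is a fixed point of the continuous self-map $\Phi$ of the compact convex box $\mathcal S$ given by $\Phi_i(v)=\sat{v_i-(b_i(v)+w_i)/(a_i\kai)}$. Brouwer's theorem supplies such a $v^0$, and reversing the computation ($x_i^0:=(b_i(v^0)+w_i)/a_i$, $u_i^0:=v_i^0-x_i^0/\kai$, $z_i^0:=-(u_i^0+\kpi x_i^0)/\kii$) yields an equilibrium of \eqref{eq:decentralized closed loop}.

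\emph{Stability and uniqueness.} Trajectories are ultimately bounded: $x$ is driven by the bounded signal $b(\sat u)\in\mathcal B$ plus the constant $w$ through internally stable dynamics, and then large $|u_i|$ is fought off by $-\kii\kai\,\dzi{u}$ while the other terms in $\dot u_i$ are bounded; so it suffices to work on a compact forward-invariant set. I would also extract from Assumption \ref{ass:assumptions on b} the incremental, $\eta$-weighted monotonicity $\sum_i\eta_i(v_i-v_i')\bigl(b_i(v)-b_i(v')\bigr)\ge0$ for all $v,v'\in\mathcal S$, strict unless $v=v'$ (Assumption \ref{ass:assumptions on b} gives signs only for \emph{ordered} pairs, so one reaches a general pair via $b(v)-b(v')=\bigl(b(v)-b(v\wedge v')\bigr)+\bigl(b(v\wedge v')-b(v')\bigr)$, using $(i)$ on the coordinates held fixed and $(ii)$ on the aggregate); in particular $b$ is injective on $\mathcal S$. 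Centering coordinates at the equilibrium ($\xi=x-x^0$, $\tu=u-u^0$), I would then seek a Lyapunov function of the form
\[
  V \;=\; \tfrac12\sum_{i=1}^n \eta_i\bigl(\alpha_i\,\xi_i^2 + \gamma_i\,\tu_i^2 + 2\beta_i\,\xi_i\tu_i\bigr) \;+\; \kpi\sum_{i=1}^n \eta_i \int_{u_i^0}^{u_i}\!\bigl(\sat{s} - \sat{u_i^0}\bigr)\,ds,
\]
a quadratic form in the shifted states plus a nonnegative Lur'e-type term in the saturation nonlinearity, and choose the per-agent weights $\alpha_i,\beta_i,\gamma_i$ so that $V$ is positive definite about the equilibrium and $\dot V\le0$ on the invariant set. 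In $\dot V$, the indefinite couplings $\xi_i\bigl(b_i(\sat u)-b_i(\sat{u^0})\bigr)$ should be absorbed by noting they equal $\xi_i\dot\xi_i+a_i\xi_i^2$ (a perfect derivative plus an $\xi_i^2$ term), leaving $-\kpi^2\sum_i\eta_i(\sat{u_i}-\sat{u_i^0})\bigl(b_i(\sat u)-b_i(\sat{u^0})\bigr)\le0$ from the monotonicity above; together with the $-a_i\xi_i^2$ terms, the dead-zone sector terms (made negative by $\kpi\kai<1$ and $\kpi a_i>\kii$), and the $\xi_i$-versus-$(\sat{u_i}-\sat{u_i^0})$ cross terms, a good weight choice should give $\dot V\le0$ with $\dot V=0$ forcing $\xi\equiv0$. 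LaSalle then finishes: on the largest invariant subset of $\{\dot V=0\}$, $x\equiv x^0$, hence $b(\sat u)\equiv b(\sat{u^0})$ and so $\sat u\equiv\sat{u^0}$ by injectivity; together with the $\dot u$-equation, which now reads $\dot u_i=-\kii\kai\bigl(\dzi{u}-\dzi{u^0}\bigr)$, and with vanishing of the dead-zone terms in $\dot V$, this forces $u\equiv u^0$ and hence $z\equiv z^0$. Thus every trajectory converges to the equilibrium, which is globally asymptotically stable, and any other equilibrium, lying in $\{\dot V=0\}$, would coincide with it.

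\emph{Main obstacle.} The work concentrates in two places. First, turning the ``ordered-pair'' Assumption \ref{ass:assumptions on b} into the \emph{strict} unordered monotonicity inequality --- the strictness is what lets LaSalle pin down $\sat u$, not merely $x$. Second, and genuinely delicate, picking the weights $\alpha_i,\beta_i,\gamma_i$ so that simultaneously the $b$-couplings cancel, the dead-zone cross terms are dominated, and $V$ stays positive definite near the equilibrium, with Assumption \ref{ass:decentralized controller tuning}'s two inequalities doing exactly the needed work. Brouwer, ultimate boundedness, and the remaining LaSalle bookkeeping are then routine.
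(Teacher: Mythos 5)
Your existence argument is sound and is essentially the paper's: the paper also invokes Brouwer, and your reformulation as a fixed point of the continuous self-map $\Phi$ of the compact box $\mathcal{S}$ is, if anything, cleaner than the paper's sketch. The stability argument, however, has a genuine gap at its load-bearing step. The ``incremental, $\eta$-weighted monotonicity'' $\sum_i \eta_i (v_i - v_i')\left(b_i(v) - b_i(v')\right) \geq 0$ does \emph{not} follow from Assumption \ref{ass:assumptions on b}, and is in fact false: the assumption constrains only the \emph{signs} of output increments against ordered input increments, and gives no control over inner products in which the output increments are weighted by the magnitudes $v_i - v_i'$. This already fails in the linear M-matrix case the paper cites as the motivating special case. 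Take
\begin{equation*}
    B = \begin{pmatrix} 11 & -0.1 \\ -10 & 1.1 \end{pmatrix}, \qquad \eta = \begin{pmatrix} 1 \\ 1 \end{pmatrix},
\end{equation*}
which is a nonsingular M-matrix with $\eta^\top B = \eta^\top$, so $b(v) = Bv$ satisfies Assumption \ref{ass:assumptions on b}; yet the symmetric part of $\diagm{\eta}B = B$ is indefinite, and e.g.\ $\mu = (1, 4.6)^\top \geq 0$ gives $\mu^\top B \mu < 0$. Consequently the term $-(\kpi)^2\sum_i \eta_i\left(\sat{u_i}-\sat{u_i^0}\right)\left(b_i(\sat{u})-b_i(\sat{u^0})\right)$ in your $\dot V$ can be strictly positive, and no choice of the weights $\alpha_i,\beta_i,\gamma_i$ can repair it, since the obstruction lives in the interconnection rather than in the decoupled dynamics. (Injectivity of $b$ does survive, but via the sign-weighted inequality below, not via your monotonicity claim.)

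This is precisely why the paper avoids quadratic and Lur'e-type constructions and instead uses the weighted $l_1$ Lyapunov function $V(\tz,\tu) = \sum_i \frac{\eta_i d_i}{p_i c_i}\lvert\tz_i\rvert + \frac{\eta_i}{p_i}\lvert\tu_i\rvert$ in the coordinates $\zeta_i = -\kii z_i$, $u_i = -\kpi x_i - \kii z_i$. Its (sub)gradient is a vector of signs, so the interconnection enters $\dot V$ only through $\sum_i \eta_i\sign{\tu_i}\left(b_i(\sat{u}) - b_i(\sat{u^0})\right)$, which is exactly the quantity Assumption \ref{ass:assumptions on b} can control: the paper's Lemma \ref{lem:sum of b} shows the sign-weighted sum over agents whose inputs changed strictly dominates the weighted output changes seen by agents whose inputs did not. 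Your decomposition through $v \wedge v'$ and the role you assign to the two tuning inequalities are in the right spirit, but they must be deployed against $\sign{\cdot}$-weighted cross terms (absolute-value Lyapunov function, per-agent sign bookkeeping, strict decrease rather than LaSalle) rather than state-weighted ones; as written, the core inequality of your $\dot V$ estimate is unavailable.
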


This theorem is proven in Section \ref{sec:stability proofs}. By an equilibrium in this context, we mean a pair $(x^0, z^0)$ with associated control input $u^0$ which solves \eqref{eq:decentralized closed loop} with $\dot{x} = \dot{z} = 0$. We can show that this equilibrium is optimal in the following sense.

\begin{theorem}[Equilibrium optimality] \label{thm: decentralized optimality}
    Let Assumptions \ref{ass:assumptions on b} and \ref{ass:decentralized controller tuning} hold, and recall the vector $\eta$ from Assumption \ref{ass:assumptions on b}. Let $(x^0, u^0)$ be the equilibrium state-input pair stabilized by the decentralized controller \eqref{eq:decentralized closed loop}. Consider any other pair $x^\dagger \in \R^n$, $u^\dagger \in \R^n$ which forms an equilibrium for the open-loop system, i.e., which solves \eqref{eq:agent dynamics} with $\dot{x} = 0$. If $\sat{u^\dagger} \neq \sat{u^0}$, then
    \begin{equation}
        \sum_{i=1}^n \eta_i a_i | x^0_i | < \sum_{i=1}^n \eta_i a_i | x^\dagger_i |.
    \end{equation}
\end{theorem}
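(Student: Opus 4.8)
\emph{Proof plan.} The plan is to reduce the statement to a static optimization claim about the map $b$, and then to establish that claim by a subgradient linearization of the absolute values combined with a monotone-path decomposition of $b$ through an element-wise maximum.

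\textbf{Reduction.} First I would read off the structure of the closed-loop equilibrium. Setting $\dot z=0$ in \eqref{eq:decentralized z} gives $x^0_i=-\kai\dzi{u^0}$ for every $i$, and since $\kai>0$ this forces: $x^0_i<0\Rightarrow\dzi{u^0}>0\Rightarrow u^0_i$ exceeds its upper saturation limit $\Rightarrow\sati{u^0}=\overline{l}_i$; $x^0_i>0\Rightarrow\sati{u^0}=\underline{l}_i$; and $x^0_i=0\Rightarrow u^0_i$ lies within the saturation limits. Put $v^0=\sat{u^0}$ and $v^\dagger=\sat{u^\dagger}$, both in $\mathcal{S}$; the conditions $\dot x=0$ for the two equilibria read $a_ix^0_i=b_i(v^0)+w_i$ and $a_ix^\dagger_i=b_i(v^\dagger)+w_i$. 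Since $a_i>0$, the asserted inequality is exactly $g(v^0)<g(v^\dagger)$ for $g(v):=\sum_i\eta_i|b_i(v)+w_i|$, with $v^0,v^\dagger\in\mathcal{S}$ and $v^\dagger\neq v^0$ (the latter by the hypothesis $\sat{u^\dagger}\neq\sat{u^0}$).

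\textbf{Subgradient step.} Using convexity of $|\cdot|$ coordinatewise at $p=b_i(v^0)+w_i=a_ix^0_i$ and weighting by $\eta_i>0$,
\begin{equation*}
g(v^\dagger)-g(v^0)\;\ge\;\sum_{i=1}^n\eta_i s_i d_i,\qquad d:=b(v^\dagger)-b(v^0),
\end{equation*}
where $s_i=\sign{x^0_i}$ if $x^0_i\neq0$ and $s_i\in[-1,1]$ is free otherwise. Introduce $P=\{i:v^\dagger_i>v^0_i\}$, $N=\{i:v^\dagger_i<v^0_i\}$, $Z=\{i:v^\dagger_i=v^0_i\}$ and $Q=\{i\in Z:x^0_i>0\}$. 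The structure above says $x^0_i<0$ forces $v^0_i=\overline{l}_i\ge v^\dagger_i$ (as $v^\dagger\in\mathcal{S}$), hence $i\notin P$, and $x^0_i>0$ forces $i\notin N$; therefore the choice $s_i=+1$ on $P\cup Q$ and $s_i=-1$ on $N\cup(Z\setminus Q)$ is admissible and agrees with $\sign{x^0_i}$ whenever $x^0_i\neq0$. It then remains to show $\sum_{P\cup Q}\eta_i d_i-\sum_{N\cup(Z\setminus Q)}\eta_i d_i>0$.

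\textbf{Monotone decomposition.} Let $\zeta\in\mathcal{S}$ be the element-wise maximum of $v^0$ and $v^\dagger$, and write $d=e+f$ with $e:=b(\zeta)-b(v^0)$, $f:=b(v^\dagger)-b(\zeta)$. Here $\zeta\ge v^0$ with $\zeta_i=v^0_i$ exactly for $i\in N\cup Z$, so by Assumption~\ref{ass:assumptions on b}: if $P\neq\emptyset$ then $e_i<0$ for all $i\in N\cup Z$ [part (i)] and $\eta^\top e>0$ [part (ii)], while if $P=\emptyset$ then $\zeta=v^0$ and $e\equiv0$. Dually $\zeta\ge v^\dagger$ with $\zeta_i=v^\dagger_i$ exactly for $i\in P\cup Z$, so if $N\neq\emptyset$ then $f_i>0$ for all $i\in P\cup Z$ and $\eta^\top f<0$, and if $N=\emptyset$ then $f\equiv0$. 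Since $P\cup Q$ and $N\cup(Z\setminus Q)$ partition $\{1,\dots,n\}$, the quantity of interest equals $E+F$ with $E=\eta^\top e-2\sum_{N\cup(Z\setminus Q)}\eta_i e_i$ and $F=2\sum_{P\cup Q}\eta_i f_i-\eta^\top f$. When $P\neq\emptyset$ the set $N\cup(Z\setminus Q)\subseteq N\cup Z$ carries $e_i<0$, so $E\ge\eta^\top e>0$; when $P=\emptyset$, $E=0$. Symmetrically $F\ge-\eta^\top f>0$ when $N\neq\emptyset$, and $F=0$ when $N=\emptyset$. As $v^\dagger\neq v^0$ forces $P\cup N\neq\emptyset$, one of $E,F$ is strictly positive and the other nonnegative, whence $g(v^\dagger)-g(v^0)\ge E+F>0$, which is the claim.

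I expect the monotone decomposition to be the main obstacle: the coordinates in $Z$ — where $v^\dagger$ and $v^0$ agree yet $d_i=b_i(v^\dagger)-b_i(v^0)$ has no a priori sign because $b$ is not separable — must be given a definite sign, and routing through $\zeta=\max(v^0,v^\dagger)$ is what accomplishes this ($e_i<0$ on all of $N\cup Z$, $f_i>0$ on all of $P\cup Z$), after which the coordinatewise monotonicity of Assumption~\ref{ass:assumptions on b}(i) and its $\eta$-weighted counterpart (ii) combine to sign $E$ and $F$. The reduction and the subgradient step should be routine.
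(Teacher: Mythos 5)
Your proof is correct. It follows the same overall strategy as the paper --- reduce the claim to the static inequality $\sum_i\eta_i|b_i(v^0)+w_i|<\sum_i\eta_i|b_i(v^\dagger)+w_i|$, linearize the absolute values with signs pinned down by the anti-windup equilibrium condition $x^0_i=-\kai\dzi{u^0}$, and then exploit Assumption~\ref{ass:assumptions on b} via monotone intermediate points --- but you execute the key step genuinely differently. The paper invokes Lemma~\ref{lem:sum of b}, whose statement is indexed by the sets where $v^0_i\neq v^\dagger_i$, and since the linearization naturally produces sums over the sets where $x^0_i\neq 0$ the paper needs a four-case bookkeeping step to pass between the two partitions. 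You bypass this entirely by (a) exploiting the freedom in the subgradient on $\{x^0_i=0\}$ to impose a sign pattern that is constant on $P\cup Q$ and on $N\cup(Z\setminus Q)$, and (b) splitting $d=b(v^\dagger)-b(v^0)$ through the single point $\zeta=\max(v^0,v^\dagger)$ and applying both parts of Assumption~\ref{ass:assumptions on b} twice, to the pairs $(\zeta,v^0)$ and $(\zeta,v^\dagger)$; the paper's lemma instead routes through both the elementwise max and min and applies part (ii) once to that pair. Every step checks out: the implications $x^0_i<0\Rightarrow\sati{u^0}=\overline{l}_i\Rightarrow i\notin P$ and $x^0_i>0\Rightarrow\sati{u^0}=\underline{l}_i\Rightarrow i\notin N$ make your subgradient choice admissible; the identities $E=\eta^\top e-2\sum_{N\cup(Z\setminus Q)}\eta_ie_i$ and $F=2\sum_{P\cup Q}\eta_if_i-\eta^\top f$ are correct; and the signs $e_i<0$ on $N\cup Z$ (when $P\neq\emptyset$) and $f_i>0$ on $P\cup Z$ (when $N\neq\emptyset$) give $E\geq\eta^\top e>0$ and $F\geq-\eta^\top f>0$ respectively, with at least one strict because $v^0\neq v^\dagger$. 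Your version is self-contained and arguably cleaner for this theorem; what it does not buy you is the reusable Lemma~\ref{lem:sum of b}, which the paper also needs in the Lyapunov argument for Theorem~\ref{thm: decentralized stability}.
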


This theorem is proven in Section \ref{sec:optimality proofs}. This optimality guarantee is given for the objective $\sum_{i=1}^n \eta_i a_i | x^0_i | $, characterized by $a$ and $\eta$. Hence this result does not yield a control design method for minimizing general costs on the form $\norm{1}{Wx}$, where $W$ is an arbitrary weight. Rather, it highlights that for an interesting class of problems, this fully decentralized controller which is designed without explicit parameterization of $b$ can still provide a notion of optimality. In fact, if Assumption \ref{ass:assumptions on b} is satisfied for a whole set of vectors $\eta$, the optimality of Theorem \ref{thm: decentralized optimality} holds for all such vectors. 

\subsection{Coordinating Control}

The second control strategy we consider introduces a coordinating anti-windup signal. The proposed closed-loop system is given by
\begin{subequations}
\begin{align}
    \dot{x}_i &= -a_i x_i + b_i(\sat{u}) + w_i \label{eq:coordinated x}\\
    \dot{z}_i &= x_i + \kc \ones^\top \dz{u} \label{eq:coordinated z}\\
    u_i &= -\kpi x_i - \kii z_i. \label{eq:coordinated u}
\end{align}
\label{eq:coordinated closed loop}
\end{subequations}
The only difference from the decentralized strategy \eqref{eq:decentralized closed loop} is the coordinating anti-windup term $ \kc \ones^\top \dz{u}$, where $\kc \in \Rpos$ is an anti-windup gain. This coordinating controller is also fully decentralized in the unsaturated domain $\dz{u} = 0$. When saturation occurs, the communication is rank-1, hence it can be implemented simply though one shared point of communication, or via scalable consensus-protocols \cite{consensus_murray}. The coordinating term $ \kc \ones^\top \dz{u}$ heuristically embeds the following idea: If the current disturbance on the system is large, and an agent requires more control action than the saturation allows ($\dz{u_k}$ large for some $k \in 1,\dots,n$), then this will enter into the coordinating term $ \kc \ones^\top \dz{u}$ and make all other agents reduce their control action, freeing more of the shared resource. 

We assume that the coordinating controller is designed according to the following rules.
\begin{assumption} \label{ass:coordinating controller tuning}
    For all agents $i = 1,\dots,n$, it holds that $a_i \kpi = (1+\alpha) \kii$, where $\alpha \in \Rpos$ is a tuning gain known to all agents. Additionally, the anti-windup gain $\kc \in \Rpos$ is chosen sufficiently small, such that $\frac{\kc}{2} \ones^\top \kp \leq 1$. 
\end{assumption}

Equation \eqref{eq:coordinated z} imposes the equilibrium condition $-x^0 = \kc \ones \ones^\top \dz{u^0}$, i.e., $x^0$ is parallel to $\ones$. This means that in any closed-loop equilibrium, the imposed control error is shared equally between all agents. In a sense, this means that the resource is being shared in a fair fashion between the agents. In fact, the imposed equilibrium will be optimally fair in the following sense.

\begin{theorem}[Equilibrium optimality] \label{thm: coordinating optimality}
    Let Assumptions \ref{ass:assumptions on b} and \ref{ass:coordinating controller tuning} hold. Assume that $(x^0, u^0)$ is an equilibrium stabilized by the coordinating controller \eqref{eq:coordinated closed loop}. Consider any other pair $x^\dagger \in \R^n$, $u^\dagger \in \R^n$ forming an equilibrium for the open-loop system, i.e., they solve \eqref{eq:agent dynamics} with $\dot{x} = 0$. If $\sat{u^\dagger} \neq \sat{u^0}$ then
    \begin{equation}
        \norm{\infty}{x^0} < \norm{\infty}{x^\dagger}. 
    \end{equation}
\end{theorem}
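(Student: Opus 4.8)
The plan is to exploit the special structure of equilibria of the coordinating closed loop, namely that $x^0 = -\kc \ones\ones^\top \dz{u^0}$ is parallel to $\ones$, and then run an argument of the same flavour as the one behind Theorem~\ref{thm: decentralized optimality}, comparing $\sat{u^0}$ with the competing $\sat{u^\dagger}$ componentwise via Assumption~\ref{ass:assumptions on b}. Write $s^0 = \sat{u^0}$, $s^\dagger = \sat{u^\dagger}$, both in $\mathcal{S}$. Since $x^0 = c\,\ones$ for some scalar $c$, we have $\norm{\infty}{x^0} = |c| = |x^0_i|$ for every $i$; the target inequality becomes $|x^0_i| < \norm{\infty}{x^\dagger}$ for all $i$, i.e. I must produce, for each $i$, some index $j$ (possibly depending on $i$) with $|x^0_i| < |x^\dagger_j|$. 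Equivalently it suffices to find a single index $j$ with $|x^0_j| < |x^\dagger_j|$ when all the $|x^0_i|$ are equal — wait, more carefully: because $\norm{\infty}{x^\dagger} \ge |x^\dagger_j|$ for every $j$, it is enough to exhibit one $j$ with $|x^\dagger_j| > |c|$.

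The key step is a componentwise sign analysis. From the equilibrium equations, $a_i x^0_i = b_i(s^0) + w_i$ and $a_i x^\dagger_i = b_i(s^\dagger) + w_i$, so $a_i(x^0_i - x^\dagger_i) = b_i(s^0) - b_i(s^\dagger)$. Partition the index set into $I_+ = \{i : s^0_i > s^\dagger_i\}$, $I_- = \{i : s^0_i < s^\dagger_i\}$, $I_0 = \{i : s^0_i = s^\dagger_i\}$; by hypothesis $I_+ \cup I_-$ is nonempty. The idea is to show that on the "competition" set the integral/anti-windup equilibrium relations force $x^0$ to be strictly better. Concretely, examine the controller equilibrium: $\dot z_i = 0$ gives $x^0_i = -\kc \ones^\top\dz{u^0}$, a common value $c$ for all $i$, and $u^0_i = -\kpi x^0_i - \kii z^0_i$. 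On indices in $I_0$, apply Assumption~\ref{ass:assumptions on b}(i): restricting to the vectors $s^0, s^\dagger$ where they agree in coordinate $i\in I_0$, one of $b_i(s^0)-b_i(s^\dagger)$ must be negative depending on which of $s^0\ge s^\dagger$ (or the reverse) fails — so actually I need to split $I_0$ by whether $s^0 \ge s^\dagger$ globally fails; the clean route is: if $s^0 \ge s^\dagger$ with $s^0 \ne s^\dagger$, then (i) gives $b_i(s^0) < b_i(s^\dagger)$ for $i \in I_0$, hence $a_i(x^0_i - x^\dagger_i) < 0$, i.e. $x^\dagger_i > x^0_i = c$ on $I_0$; meanwhile on $I_+$ the sign of $\dz{}$ contributions and Assumption~\ref{ass:coordinating controller tuning} ($a_i\kpi = (1+\alpha)\kii$, $\tfrac{\kc}{2}\ones^\top\kp \le 1$) pin down that $x^0_i$ and $c$ have a sign consistent with $\dz{u^0_i}\ne 0$, giving $|x^\dagger_i|$ strictly exceeding $|c|$ for at least one index. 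The general case where neither $s^0 \ge s^\dagger$ nor $s^\dagger \ge s^0$ holds is handled by the same dichotomy applied coordinatewise, using that $b$ only needs monotonicity in the partial order locally at a coordinate.

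I expect the main obstacle to be the bookkeeping that ties the sign of the shared error $c = x^0_i$ to the sign of $\ones^\top \dz{u^0}$ and then to the individual dead-zone terms $\dz{u^0_j}$ — in particular ruling out the degenerate possibility that $c$ and $x^\dagger$ conspire so that $|x^\dagger_j| \le |c|$ on every competing index. This is where Assumption~\ref{ass:coordinating controller tuning} must be used: the relation $a_i \kpi = (1+\alpha)\kii$ lets me rewrite $u^0_i$ at equilibrium purely in terms of $x^0_i$ and $z^0_i$ in a way that makes $\dzi{u^0}$ monotone in $x^0_i$, and the gain bound $\tfrac{\kc}{2}\ones^\top\kp \le 1$ guarantees the resulting fixed-point map for $c$ is a contraction-type map with a sign-definite structure, so $c$ inherits the sign of $-\ones^\top(b(s^0)+w)$-type quantities consistently. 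Once that sign is fixed, Assumption~\ref{ass:assumptions on b}(ii) with the weight $\eta$ — summing $\eta_i a_i (x^0_i - x^\dagger_i) = \eta^\top(b(s^0) - b(s^\dagger))$ and noting the right side is strictly positive or strictly negative depending on the global ordering of $s^0$ vs $s^\dagger$, split across $I_\pm$ — forces at least one competing coordinate $j$ to have $x^\dagger_j$ strictly on the far side of $c$, i.e. $|x^\dagger_j| > |c| = \norm{\infty}{x^0}$, completing the argument. I would close by remarking that, unlike Theorem~\ref{thm: decentralized optimality}, here every equilibrium of the closed loop (not a unique one) enjoys this property, since the argument used only the equilibrium relations, not stability.
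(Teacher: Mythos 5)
Your starting point matches the paper's: $x^0 = -\kc\ones\ones^\top\dz{u^0}$ is parallel to $\ones$, so $\norm{\infty}{x^0}=|c|$ with $c=x^0_i$ for all $i$, and it suffices to produce one coordinate $j$ with $|x^\dagger_j|>|c|$. But from there the argument has a genuine gap, and the obstacle you yourself flag --- tying the sign of $c$ to the dead-zone terms and ruling out the ``conspiracy'' case --- is exactly the part that is never closed. Two specific problems. First, your partition into $I_+,I_-,I_0$ and the appeal to Assumption~\ref{ass:assumptions on b} only works when $\sat{u^0}$ and $\sat{u^\dagger}$ are \emph{comparable} in the componentwise order: both $(i)$ and $(ii)$ are stated only for pairs $\vu\geq\vl$, so ``the same dichotomy applied coordinatewise'' in the incomparable case is not licensed by the assumption, and the sign of $\eta^\top\bigl(b(s^0)-b(s^\dagger)\bigr)$ is simply not determined there. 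Second, even where the sum argument applies, $\sum_i\eta_ia_i(x^0_i-x^\dagger_i)<0$ only yields some $j$ with $x^\dagger_j>c$, which gives $|x^\dagger_j|>|c|$ only if $c\geq 0$; you never establish which sign of $c$ pairs with which direction of inequality. The invocation of Assumption~\ref{ass:coordinating controller tuning} via a ``contraction-type fixed-point map for $c$'' is a red herring: the paper's optimality proof uses nothing from that assumption beyond $\kc>0$.

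The missing idea is to argue by contradiction and use Lemma~\ref{lem:b inverse positive} (inverse monotonicity of $b$), which converts an ordering of \emph{outputs} into a strict ordering of \emph{inputs} and thereby sidesteps the comparability issue entirely. Suppose $c>0$ and $\norm{\infty}{x^\dagger}\leq c$. Then $x^\dagger\leq x^0$ componentwise, so $b(\sat{u^\dagger})\leq b(\sat{u^0})$, and Lemma~\ref{lem:b inverse positive} forces $\sat{u^\dagger}<\sat{u^0}$ in \emph{every} coordinate. But $c>0$ means $\ones^\top\dz{u^0}<0$, so some coordinate of $u^0$ saturates at its lower limit $\underline{l}_i$, where $\sat{u^\dagger}_i\geq\underline{l}_i=\sat{u^0}_i$ --- a contradiction. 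The case $c<0$ is symmetric, and $c=0$ forces $x^\dagger=0$, hence $b(\sat{u^\dagger})=b(\sat{u^0})$ and (again by Lemma~\ref{lem:b inverse positive}, applied in both directions) $\sat{u^\dagger}=\sat{u^0}$, contradicting the hypothesis. Without Lemma~\ref{lem:b inverse positive} or an equivalent, your componentwise bookkeeping cannot be completed.
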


This theorem is proven in Section \ref{sec:optimality proofs}. While this is a strong result, it is only interesting if two implicit assumptions are satisfied: That such an equilibrium exists, and that it is globally (or at least locally) asymptotically stable. However, this is not always the case. As stated previously, \eqref{eq:coordinated z} imposes that any equilibrium $x^0$ is parallel to $\ones$. At the same time, \eqref{eq:coordinated x} imposes that any equilibrium satisfies $x^0 = A\inv \left(b(\sat{u})+w\right)$ where $A = \diagm{a}$. As $b$ is bounded, these two relations can only hold if $A\inv w$ is approximately parallel to $\ones$, i.e., if $w$ has a similar effect on each agent. An exact characterization of such a condition on $w$ is outside the scope of this work. We refer to \cite{AGNER2023_antiwindup_coordination_strategy} for the case where $b$ is linear. Furthermore, even when an equilibrium exists, it is non-trivial to show that the equilibrium will be stable. Such an exercise is outside the scope of most regular stability analysis for saturating systems, where it is often assumed that the stabilized equilibrium lies in the unsaturated region. We can however show the following result, which applies when the the disturbance is small enough to be rejected. This theorem is proven in Section \ref{sec:stability proofs}.

\begin{theorem}[Global asymptotic stability] \label{thm: coordinating stability}
    Let Assumptions \ref{ass:assumptions on b} and \ref{ass:coordinating controller tuning} hold. Additionally, assume that $b(\overline{l}) + w > 0$ and $b(\underline{l}) + w < 0$. Then the closed loop system \eqref{eq:coordinated closed loop} formed by the coordinating controller has a unique, globally asymptotically stable equilibrium.
\end{theorem}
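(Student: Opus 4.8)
The plan is to prove Theorem \ref{thm: coordinating stability} in two stages: (a) show the closed loop \eqref{eq:coordinated closed loop} has exactly one equilibrium and that, thanks to the extra hypotheses $b(\overline{l})+w>0$ and $b(\underline{l})+w<0$, this equilibrium lies in the unsaturated region with $x^0=0$; and (b) establish global asymptotic stability of that equilibrium via a Lyapunov argument.

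For stage (a), I would first characterize an arbitrary equilibrium $(x^0,z^0,u^0)$. Setting $\dot{z}=0$ in \eqref{eq:coordinated z} gives $x^0=-\kc(\ones^\top\dz{u^0})\ones$, so $x^0$ is a multiple of $\ones$; write $s:=\ones^\top\dz{u^0}$. Setting $\dot{x}=0$ in \eqref{eq:coordinated x} gives $b_i(\sat{u^0})+w_i=a_i x^0_i=-\kc s\,a_i$ for all $i$. I claim $s=0$: if $s>0$ there is an index $k$ with $\dz{u^0}_k>0$, hence $\sat{u^0}_k=\overline{l}_k$; since $\sat{u^0}\le\overline{l}$, Assumption \ref{ass:assumptions on b}(i) (or the hypothesis $b(\overline{l})+w>0$ directly if $\sat{u^0}=\overline{l}$) gives $b_k(\sat{u^0})\ge b_k(\overline{l})>-w_k$, contradicting $b_k(\sat{u^0})=-w_k-\kc s\,a_k<-w_k$; the case $s<0$ is symmetric using $\underline{l}$. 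Hence $s=0$, $x^0=0$, and $b(\sat{u^0})=-w$. Running the same comparison against the faces $\sat{u^0}_i=\overline{l}_i$ and $\sat{u^0}_i=\underline{l}_i$ shows $\sat{u^0}\in\mathrm{int}(\mathcal{S})$, so $u^0=\sat{u^0}$ is unsaturated, $\dz{u^0}=0$, and \eqref{eq:coordinated u} then forces $z^0_i=-u^0_i/\kii$.

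Existence of some $v^\star\in\mathcal{S}$ with $b(v^\star)=-w$ follows from the Poincar\'e--Miranda theorem applied to $f(v):=b(v)+w$: by Assumption \ref{ass:assumptions on b}(i) together with $b(\overline{l})+w>0$ and $b(\underline{l})+w<0$, we get $f_i>0$ on the face $v_i=\overline{l}_i$ and $f_i<0$ on the face $v_i=\underline{l}_i$, so $f$ vanishes somewhere in $\mathcal{S}$, and in fact in its interior. Uniqueness of $v^\star$ uses Assumption \ref{ass:assumptions on b}: if $b(v_1)=b(v_2)=-w$ with $v_1\ne v_2$ comparable, part (ii) yields $\eta^\top(b(v_1)-b(v_2))\ne 0$, a contradiction; if $v_1,v_2$ are incomparable, the componentwise maximum $\bar v=\max(v_1,v_2)$ (which lies in $\mathcal{S}$ since it is a box) satisfies $b(\bar v)<-w=b(v_1)$ componentwise by part (i), so $\eta^\top(b(\bar v)-b(v_1))<0$, whereas $\bar v\ge v_1$, $\bar v\ne v_1$ forces $\eta^\top(b(\bar v)-b(v_1))>0$ by part (ii), again a contradiction. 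Thus the equilibrium is the unique triple $x^0=0$, $u^0=v^\star$, $z^0_i=-v^\star_i/\kii$. For stage (b), pass to error coordinates $\xi=x$, $\zeta=z-z^0$, in which $\dot{\xi}_i=-a_i\xi_i+b_i(\sat u)-b_i(v^\star)$, $\dot{\zeta}_i=\xi_i+\kc\ones^\top\dz u$, and $u_i-u^0_i=-\kpi\xi_i-\kii\zeta_i$, the anti-windup term vanishing at the equilibrium because $\dz{u^0}=0$. I would then build a Lyapunov function of Lur'e type: an $\eta$-weighted quadratic form in $(\xi,\zeta)$, with cross weights chosen using the gain relation $a_i\kpi=(1+\alpha)\kii$ of Assumption \ref{ass:coordinating controller tuning} so that the form is positive definite, plus a nonnegative integral term built from the saturation nonlinearity (for instance $\sum_i\eta_i\int_0^{u_i-u^0_i}(\sat{u^0_i+\sigma}-u^0_i)\,\mathrm{d}\sigma$), which is zero only at the equilibrium and, since $u^0$ is interior to $\mathcal{S}$, grows without bound so that $V$ is radially unbounded. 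Differentiating along trajectories, the $-a_i\xi_i$ terms give a negative definite contribution; the $b(\sat u)-b(v^\star)$ terms are controlled qualitatively through Assumption \ref{ass:assumptions on b}(i)--(ii), without any explicit model of $b$; and the rank-1 coupling $\kc\ones^\top\dz u$ is absorbed using the slope/sector properties of the dead-zone and the smallness bound $\frac{\kc}{2}\ones^\top\kp\le 1$ of Assumption \ref{ass:coordinating controller tuning}. This should deliver $\dot V\le 0$ with $\dot V=0$ only at the equilibrium, after which LaSalle's invariance principle gives global asymptotic stability.

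I expect the main obstacle to be this Lyapunov step: choosing weights that make $V$ simultaneously positive definite, radially unbounded, and with $\dot V\le 0$ \emph{globally}, i.e.\ also in the saturated region where the dead-zone is active, while exploiting only the qualitative Assumption \ref{ass:assumptions on b} on the unmodelled $b$ and absorbing the rank-1 anti-windup coupling. Checking that the largest invariant set inside $\{\dot V=0\}$ is just the equilibrium, so that asymptotic (not merely Lyapunov) stability follows, is a secondary technical point.
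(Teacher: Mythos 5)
Your stage (a) is essentially sound and in fact more explicit than what the paper writes down: the characterization $x^0=-\kc(\ones^\top\dz{u^0})\ones$, the face-by-face contradiction forcing $s=0$ and $\sat{u^0}\in\mathrm{int}(\mathcal{S})$, and the existence/uniqueness of $v^\star$ with $b(v^\star)=-w$ are all correct (for uniqueness in the incomparable case you need to invoke Assumption \ref{ass:assumptions on b}$(i)$ against \emph{both} $v_1$ and $v_2$ to get $b(\bar v)<-w$ in every component, and note that Lemma \ref{lem:b inverse positive} gives uniqueness in one line: $b(v_1)=b(v_2)$ implies both $v_1>v_2$ and $v_2>v_1$). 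The genuine gap is stage (b), which you yourself flag as the main obstacle but do not resolve. The proposed Lur\'e-type function --- an $\eta$-weighted \emph{quadratic} form plus $\sum_i\eta_i\int_0^{u_i-u^0_i}(\sat{u_i^0+\sigma}-u_i^0)\,d\sigma$ --- is unlikely to close, because Assumption \ref{ass:assumptions on b} is purely qualitative and sign-based: the only leverage it gives on $b(\sat{u})-b(v^\star)$ is the sign-weighted $\ell_1$ inequality of Lemma \ref{lem:sum of b}, which pairs with $\sign{\cdot}$ vectors, not with the linear-in-state multipliers $\xi_i$ that a quadratic $\dot V$ produces. There is no monotonicity, Lipschitz, or sector bound on $b$ available to absorb that cross term into a quadratic framework, which is precisely why the paper's decentralized proof uses a weighted $\ell_1$ Lyapunov function rather than a quadratic one.

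The paper avoids constructing any single global Lyapunov function for the saturated coordinating dynamics by splitting the argument differently from you. First (Lemma \ref{lem: coordinating unsaturated}) it takes the dead-zone--based function
\begin{equation*}
V(\zeta,u)=\tfrac{1}{2}\dz{\zeta}^\top DR\inv C\inv\dz{\zeta}+\tfrac{1}{2}\dz{u}^\top R\inv\dz{u},
\end{equation*}
which vanishes on the unsaturated set, and shows $\dot V<0$ whenever $\dz{u}\neq 0$: the hypotheses $b(\overline{l})+w>0$ and $b(\underline{l})+w<0$ combined with Assumption \ref{ass:assumptions on b}$(i)$ make the term $-\dz{u}^\top R\inv P(b(\sat{u})+w)$ strictly negative (each active dead-zone component sees the ``wrong'' sign of $b_i+w_i$), and the rank-1 coupling is handled by the Schur-complement condition $\frac{\kc}{2}\ones^\top\kp\leq 1$ from Assumption \ref{ass:coordinating controller tuning}. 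This proves the set $\{\dz{u}=0\}$ is globally attracting and forward invariant. Second, on that set the coordinating and decentralized closed loops coincide, and Assumption \ref{ass:coordinating controller tuning} implies Assumption \ref{ass:decentralized controller tuning}, so Theorem \ref{thm: decentralized stability} (with its $\ell_1$ Lyapunov function) finishes the job. If you want to salvage your single-function approach, you would need to replace the quadratic core by a weighted $\ell_1$ form as in the decentralized proof; but then the rank-1 anti-windup term $\kc\ones\ones^\top\dz{u}$ no longer acts diagonally and the sign-by-sign bookkeeping used there breaks down --- which is exactly the difficulty the paper's two-stage decomposition is designed to sidestep.
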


\section{Motivating Example - District Heating} \label{sec:motivating example}

\begin{figure*}
    \centering
    \includegraphics[width=\linewidth]{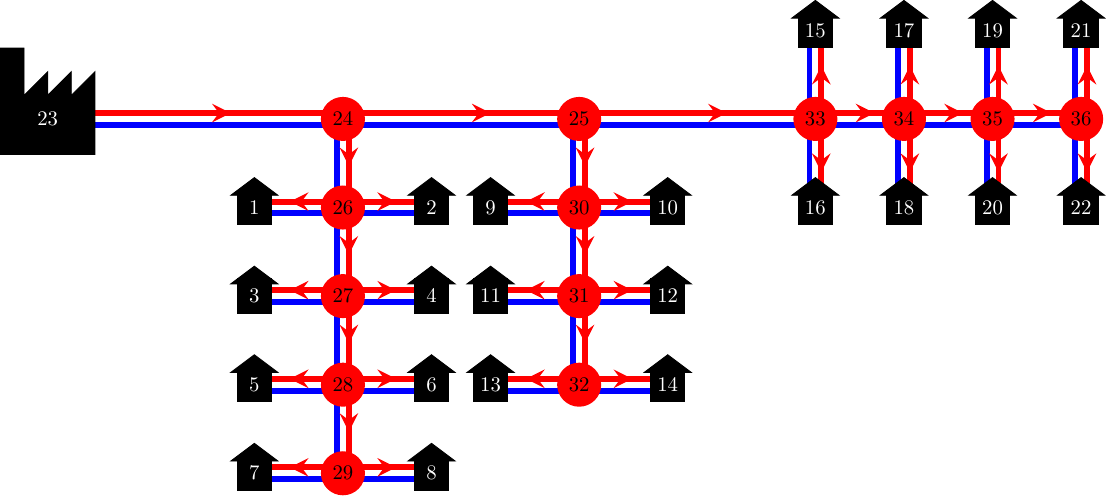}
    \caption{A district heating network. A heating plant (node 23) heats up water and pumps it out to consumers (nodes 1-22) through the supply-side network (red edges). The water subsequently returns through the return-side network (blue edges).}
    \label{fig:dhn schematic}
\end{figure*}

To illustrate the usefulness of the theoretical results, we consider district heating networks as a motivating example. Figure \ref{fig:dhn schematic} shows a schematic example of such a system. Typically in existing networks of traditional design, one or a few large heating plants produce hot water which is pumped out to consumers via a network of pipelines (red edges in Figure \ref{fig:dhn schematic}). Each consumer is equipped with a valve to regulate the amount of hot water they receive. This water runs through a heat exchanger in which heat is transferred to the internal heating system of the building. The water subsequently returns through another network of pipes (blue edges in Figure \ref{fig:dhn schematic}) which is symmetric to the supply-side network. The primary aim in the network is to supply enough hot water to the consumers, such that they can maintain comfort temperatures within their buildings. We consider a simple dynamical model for the temperature $T_i$ in each building $i=1,\dots,n$, which consumer $i$ would like to maintain at a reference temperature $T^r_i$. Hence the tracking error is $x_i = T_i - T^r_i$. The dynamics guiding the tracking error $x_i$ is thus given by
\begin{equation}
    c_i \dot{x}_i = -\hat{a}_i\left( T^r_i + x_i - T_o \right) + c_{p,w}\rho_w \delta_i q_i(v).
    \label{eq:building dynamics}
\end{equation}
Here $c_i$ is the heat capacity of building $i$. $\hat{a}_i$ is the thermal conductance and $T_o$ is the outdoor temperature, acting as a disturbance. Hence the first term $-\hat{a}_i\left( T^r_i + x_i - T_o \right)$ corresponds to diffusion of heat between the interior and exterior of the building. $c_{p,w}$ and $\rho_w$ are the specific heat capacity and density of water respectively (both assumed constant) and $q_i$ is the volume flow rate going through the building heat exchanger. $\delta_i$ is the difference in supply-and-return temperature before and after the heat exchanger. Hence the second term $c_{p,w}\rho_w \delta_i q_i(v)$ corresponds to the heat provided to the building through the heat exchanger. The volume flow rate $q_i$ is regulated by the valve positions $v = \sat{u}$. Note that the flow rate $q_i$ provided to consumer $i$ is influenced by the valve positions $v$ of all consumers in the network, not only $v_i$. This is because the pressure distribution in the network is affected by all of the flow rates in the network. Herein lies the main connection to our theoretical results. As the central pump is limited in its maximum capacity, and the valves themselves are saturated, the volume flow rate and hence the heat that can be supplied to the consumers is limited. Hence when the disturbance $T_o$ is sufficiently low, the available capacity becomes insufficient.

To complete the connection between the temperature model \eqref{eq:building dynamics} and the agent dynamics \eqref{eq:agent dynamics} as we have considered in this paper, we can identify $a_i = \frac{\hat{a}_i}{c_i}$, $b_i(\sat{u}) = \frac{c_{p,w}\rho_w \delta_i}{c_i}q_i(\sat{u})$ and $w_i = \frac{\hat{a}_i}{c_i}(T_o = T^r_i)$. Secondly, we make the following simplifying assumption.
\begin{assumption}
    The delta temperatures $\delta_i$, $c_{p,w}$ and $\rho_w$ are all constant.
\end{assumption}
In practice this assumption will not hold exactly. The delta temperature changes slightly with several factors, such as the supply-temperature in the network, the activity on the secondary side of the heat exchanger (i.e., the side facing the consumers internal heating system). There are also slight temperature-dependent variations in the density of the water. In general however and over shorter time-spans, these variations are much smaller than the variations in volume flow rate. This assumption means that the final verification to make is that $q$ satisfies Assumption \ref{ass:assumptions on b}. Under the assumption that we use common static models for the valves and pipes in the network such as in \cite{de_persis_output_2014,AGNER2022100067,jeeninga_existence_2023}, that the network is tree-structured, and at the pump at the root of the tree operates at constant capacity, we can show the following. Hence $q$ satisfies Assumption \ref{ass:assumptions on b}.
\begin{proposition}
    Given two sets of valve positions $\vu \geq \vl$,
    \begin{enumerate}[(i)]
        \item $q_i(\vu) - q_i(\vl) \leq 0$ if $\vu_i = \vl_i$, and
        \item $\ones^\top \left( q(\vu) - q(\vl) \right) > 0$ if $\vu \neq \vl$.
    \end{enumerate}
\end{proposition}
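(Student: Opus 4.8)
The plan is to work with the hydraulic model explicitly. First I would set up the standard static equations: to each edge of the supply tree associate a flow, a pressure drop modeled by a monotone resistance law (pressure drop an increasing function of flow, as in \cite{de_persis_output_2014,AGNER2022100067,jeeninga_existence_2023}), and to each consumer edge a valve whose resistance is a decreasing function of the valve position $v_i$ (a more open valve lets more water through for the same pressure drop). Because the network is a tree rooted at the pump, every consumer flow $q_i$ is the flow through the unique path from the root to consumer $i$, and Kirchhoff's current law says the flow in any internal edge is the sum of the $q_j$ of the consumers in the subtree below it. The pump supplies a fixed total head (constant capacity) between the root of the supply tree and the root of the return tree. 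So the whole steady state is determined by: the pressure at each node, with the path from root to leaf $i$ having total pressure drop equal to the (fixed) pump head, summed over the internal edges (resistance depending on aggregated downstream flow) plus the valve edge (resistance depending on $v_i$ and $q_i$).

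For part (ii) I would argue by a monotonicity/contradiction argument on the total flow. Suppose $\vu \geq \vl$, $\vu \neq \vl$, but $\ones^\top q(\vu) \leq \ones^\top q(\vl)$. The total flow equals the flow through the edge incident to the pump, and the pressure drop across that edge is an increasing function of that flow; since the pump head is fixed, a non-increase in total flow means the pressure drop across that root edge does not increase, hence the remaining head available to the rest of the tree does not decrease. Pushing this down the tree: on any internal edge, if the downstream aggregate flow did not increase, the head left below it did not decrease. Eventually one reaches each valve edge $i$: it sees a head that did not decrease, while its resistance did not increase (since $v_i$ increased weakly and the valve resistance is decreasing in $v_i$) — so $q_i$ cannot have strictly decreased, and if $v_i$ strictly increased then $q_i$ strictly increases. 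Since some $v_k$ strictly increased, $q_k$ strictly increases, contradicting $\ones^\top q(\vu) \leq \ones^\top q(\vl)$. To make this rigorous I would either (a) invoke an existence-and-uniqueness result for the hydraulic equations from the cited papers and then run the comparison argument on the unique solution, or (b) phrase it as a fixed-point/monotone-operator statement and use a global monotonicity property of the flow-to-flow map.

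For part (i), fix $i$ with $\vu_i = \vl_i$ and suppose $q_i(\vu) > q_i(\vl)$. The valve $i$ resistance is unchanged (same $v_i$), so a larger $q_i$ forces a larger pressure drop across valve $i$, hence a strictly smaller head available at the node feeding valve $i$ relative to the pump potential — equivalently the cumulative internal-edge drop on the path to $i$ strictly increased, which (resistances increasing in flow) means the aggregate flow on at least one internal path-edge strictly increased; combined with part (ii)'s downward monotonicity this should be shown to be inconsistent with $q_i$ up while the head at valve $i$ is down. The cleanest route is probably to first establish the general fact ``$\vu \geq \vl$ implies $q(\vu) \geq q(\vl)$ componentwise whenever the path-heads allow it'' is \emph{false} in general (that is exactly the competition effect), and instead prove directly the weaker monotonicity we need: opening other valves while keeping valve $i$ fixed cannot increase $q_i$, because it can only increase aggregate flows on the shared internal edges, raising their pressure drops, leaving less head for valve $i$, whose resistance is unchanged — hence $q_i$ weakly decreases.

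The main obstacle I anticipate is the bookkeeping of the tree recursion: I need a clean induction from the leaves (or from the root) that simultaneously tracks ``aggregate downstream flow did not increase'' and ``residual head did not decrease'' along every root-to-leaf path, and to handle the coupling between different branches that share a common prefix of internal edges. Getting the direction of all the monotonicities consistent — resistance increasing in flow, valve resistance decreasing in valve position, fixed pump head — and ruling out the degenerate possibility that flows rearrange between branches without changing totals, is where the real care is needed; a uniqueness result for the hydraulic steady state (cited) is what makes the comparison argument legitimate rather than merely suggestive.
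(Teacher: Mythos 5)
The paper does not actually prove this proposition --- it explicitly omits the proof ("as it demands a technical description of district heating hydraulic models") and offers only the informal motivation that opening valves reduces total resistance (hence $(ii)$) while raising pipeline pressure losses (hence $(i)$). Your proposal is essentially a more detailed elaboration of that same heuristic, so there is no formal argument in the paper to compare against; the question is whether your sketch would close into a proof.

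As written, it would not, and the gap is exactly the one you flag but do not resolve. Your argument for $(ii)$ starts from "total flow did not increase, hence the head below the root edge did not decrease" and then claims to push "downstream aggregate flow did not increase" along every root-to-leaf path. That propagation step fails: non-increase of the \emph{total} flow does not imply non-increase of the flow on each branch, since flow can migrate between branches sharing a prefix (indeed, part $(i)$ is precisely the statement that such migration happens). The standard way to close this is to reverse the direction of the monotonicity: for each subtree $T_m$, define the aggregate inflow $F_m(h, v_{T_m})$ as a function of the head $h$ applied across $T_m$, and prove by induction from the leaves that $F_m$ is continuous, strictly increasing in $h$, and non-decreasing in each valve position of $T_m$ (strictly when $h>0$). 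At a leaf this is immediate from $q_i = \sqrt{h / r_i(v_i)}$ with $r_i$ decreasing; at an internal node it follows because $h \mapsto h' + 2 s_e F(h',\cdot)^2$ is a strictly increasing bijection. With this lemma, $(ii)$ follows from the fixed pump head via $2 s_0 Q^2 = H - h'$ (raising $v$ forces the interior head $h'$ down, hence $Q$ up), and $(i)$ follows by inducting down the path to consumer $i$ that the nodal head satisfies $h_m(\vu) \leq h_m(\vl)$, which with $r_i$ unchanged gives $q_i(\vu) \leq q_i(\vl)$. Your physical picture and your identification of the branch-rearrangement issue are both correct; the missing ingredient is this subtree-head parametrization, without which the comparison argument remains, as you say, suggestive rather than a proof.
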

We omit the proof of this proposition, as it demands a technical description of district heating hydraulic models. However, we can motivate the proposition in the following way. If all agents incrementally open their valves ($\vu \geq \vl$), this reduces the resistance in the system, which means that the total throughput increases ($\ones^\top q(\vu) > \ones^\top q(\vl)$), i.e., $(ii)$. However, as the total throughput increases, so do pressure losses in the pipelines. Hence, if one valve $i$ is unchanged ($\vu_i = \vl_i$), the flow rate through the valve will decrease due to reduced differential pressure ($q_i(\vu) \leq q_i(\vl) $), i.e., $(i)$.

\subsection{Numerical Example}

To investigate the effect of the considered control strategies in a district heating setting, we perform a simulation of a small district heating network. The network is structured as in Figure \ref{fig:dhn schematic}, and each building is subject to the dynamics given in \eqref{eq:building dynamics}. For simplicity, we consider a homogeneous building stock with $c_i = 2.0 [\text{kWh/K}]$, $a_i = 1.2 [\text{kW/K}]$ and $\delta_i = 50.0 [\text{K}]$ $\forall i = 1,\dots,n$. We have $c_w = 1.16 \cdot 10^{-3} [\text{kWh/kgK}]$ and $\rho_w = 10^{3} [\text{kg/m}^3]$. While we omit a detailed description of district heating hydraulics here, we use the same type of graph-based modeling as is used in \cite{agner_aalborg}. We assume that the heating plant supplies a differential pressure of $0.6 \cdot 10^6 [\text{Pa}]$. The difference between the input and output of each pipe $e$ is given by $ \Delta p_e= s_e |q_e| q_e$ where $s_e$ corresponds to a hydraulic resistance. We use $s_e = 0.9 [\text{Pa / (m$^3$/h)}^2]$ for the long edge connecting nodes 23 and 24. We use $s_e = 0.25 [\text{Pa / (m$^3$/h)}^2]$ for the edges connecting nodes 24, 25, 26, 30 and 33. We use $s_e = 0.05 [\text{Pa / (m$^3$/h)}^2]$ for the pipes connecting nodes 26-27-28-29, nodes 30-31-32 and nodes 33-34-35-36. Finally we use $s_e = 2.5 [\text{Pa / (m$^3$/h)}^2]$ for the connection to each consumer. The pressure difference between supply-and-return-side for consumer $i$ is modeled as $\Delta p_i(q_i, v_i) = \left(5 + \frac{30}{(v_i + 1.001)^2}\right)q_i^2$. Here $v_i = \sat{u_i}$ is limited in the $v_i \in \left[-1, 1\right]$. The component $5q_i^2$ corresponds to inactive components of the consumer substation, i.e., the heat exchanger and internal piping. The remaining component $30q_i^2/(v_i+1.001)^2$ corresponds to the pressure loss over the valve. 

We subject the buildings to an outdoor temperatures disturbance $T_o$ as seen in Figure \ref{fig:outdoor temperature}, acting equally on all buildings. The temperature drops critically to below $-25^\circ$C around 50 hours into the simulation. The temperature is based on temperature data from Gävle, Sweden on January 18th-21st, 2024. The data is collected from the Swedish Meteorological and Hydrological Institute.
\begin{figure}
    \centering
    \includegraphics[width=.7\linewidth]{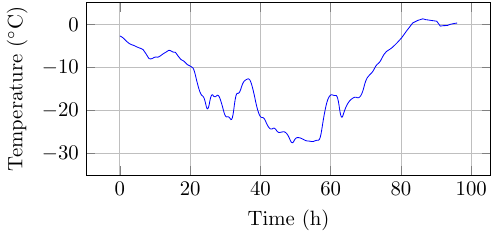}
    \caption{Outdoor temperature $T_o(t)$ used in simulation.}
    \label{fig:outdoor temperature}
\end{figure}

We consider the two control policies analyzed in this paper, namely the \textit{decentralized} and \textit{coordinating} control policies. We employ identical controllers for each consumer with $\kpi = 1.0$, $\kii = 1.0$, $\kai = 1.0$ for all $i = 1,\dots, n$ in the decentralized case, and  $\kpi = 1.0$, $\kii = 1.0$, $\kai = 1.0$, $\kc = 0.5$ in the coordinating case. As a benchmark, we compare these strategies to optimal counter-parts. In these benchmarks, the volume flow rate $q(t)$ is distributed optimally in each instance of the simulation as the solution to the problem
\begin{mini!} 
    {x,q}{J(x)}
    {\label{eq:optimal equilibrium problem}}{\label{eq:optimal cost}}
    \addConstraint{\eqref{eq:building dynamics} \text{ with } \dot{x}_i = 0 \text{ for } i = 1,\dots,n} \label{eq:temperature equilibrium}
    \addConstraint{q \in \mathcal{Q}} \label{eq:q in Q}
\end{mini!}
where we use $J(x) = \norm{1}{x}$ and $J(x) = \norm{\infty}{x}$ respectively. This problem corresponds to calculating a flow rate $q$ which is feasible within the hydraulic constraints of the network (i.e., \eqref{eq:temperature equilibrium}) which generates an equilibrium (i.e., \eqref{eq:temperature equilibrium}) which minimizes $J(x)$. To see how this optimization problem can be cast as a convex problem, we refer to \cite{AGNER2022100067} in which it is shown that $\mathcal{Q}$ is convex.

We use the \texttt{DifferentialEquations} toolbox \cite{rackauckas2017differentialequations} in \texttt{Julia} to simulate the system, utilizing the \texttt{FBDF} solver. We use the \texttt{NonlinearSolve} \cite{pal2024nonlinearsolve} toolbox to calculate $q$ as a function of the valve positions $v = \sat{u}$. We use the \texttt{Convex} toolbox \cite{convexjl} with the \texttt{Mosek} optimizer to find the optimal trajectories for the benchmark comparisons.

\begin{figure*}
    \begin{subfigure}[b]{0.5\textwidth}
         \centering
         \includegraphics[width=\linewidth]{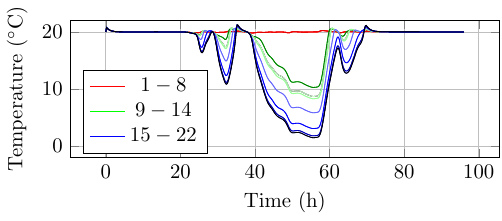}
         \caption{Decentralized PI control.}
         \label{fig:result decentralized pi}
     \end{subfigure}
     \hfill
     \begin{subfigure}[b]{0.5\textwidth}
         \centering
         \includegraphics[width=\linewidth]{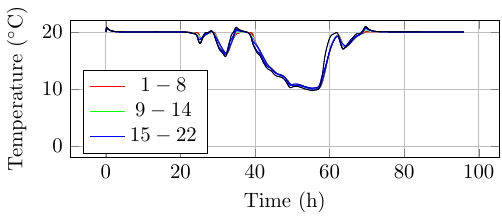}
         \caption{Coordinated PI control.}
         \label{fig:result coordinated pi}
     \end{subfigure}

     \begin{subfigure}[b]{0.5\textwidth}
         \centering
         \includegraphics[width=\linewidth]{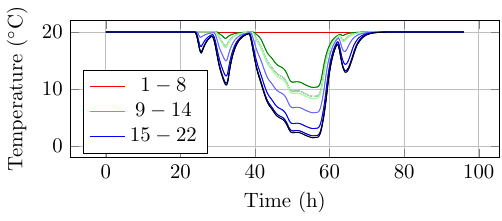}
         \caption{Optimal equilibrium input with respect to $\norm{1}{x}$.}
         \label{fig:result optimal 1}
     \end{subfigure}
     \hfill
     \begin{subfigure}[b]{0.5\textwidth}
         \centering
         \includegraphics[width=\linewidth]{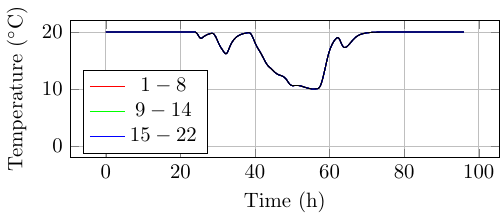}
         \caption{Optimal equilibrium input with respect to $\norm{\infty}{x}$.}
         \label{fig:result optimal inf}
     \end{subfigure}
     \caption{Resulting indoor temperatures. The three clusters of buildings, i.e., nodes 1-8, 9-14 and 15-22 are colored in red, green and blue respectively. The darker shades of each color shows the buildings further down each line.}
     \label{fig:results}
\end{figure*}

The results of the four simulations are seen in Figure \ref{fig:results}. Under all four policies, the temperatures in the buildings drop at several points during the simulation, and most significantly starting after around 40 hours. This is because of the extremely cold temperature at this time, for which the available pumping capacity is insufficient. We can first compare the results of using the decentralized strategy to the results of using the optimal equilibrium input with regards to $\norm{1}{x}$, as seen in Figures \ref{fig:result decentralized pi} and \ref{fig:result optimal 1} respectively. We find that they are effectively the same, except for minor oscillations around the equilibrium in the PI-controller case, caused by the integral action in the controller. The same comparison can be drawn between Figures \ref{fig:result coordinated pi} and \ref{fig:result optimal inf}, showing the results of using the coordinating strategy and the optimal input with regards to $\norm{\infty}{x}$. This is in line with Theorems \ref{thm: decentralized optimality} and \ref{thm: coordinating optimality}, where we expect out controllers to track the optimal equilibrium. While the interpretation of the optimal cost in Theorem \ref{thm: decentralized optimality} is obscured by the weight $\eta$, we see in this example how it can correspond to the combined tracking error of all agents $\norm{1}{x}$.

When comparing the results of using the decentralized strategy to using the coordinating strategy, we see the following: In the fully decentralized simulation, several of the buildings far away from the heating plant drop below 5$^\circ$C, whereas buildings close to the heating facility maintain comfort temperature. On the other hand, none of the buildings drop below 10$^\circ$C. However, none of the buildings maintain comfort temperature either. Which strategy is to be preferred is debatable and perhaps situational. Arguably in the extreme scenario of this simulation, the decentralized strategy might be preferred. Consumers will be severely dissatisfied if their indoor temperatures drop by $10^\circ$C, hence it may be better to have a lower number of consumers be very dissatisfied than to have the whole building stock be moderately dissatisfied. However, if we instead consider the temperatures distribution at around 30 hours into the simulations, we see that under the decentralized case there are buildings experiencing reductions in indoor temperature by $10^\circ$C, whereas in the coordinated case the worst reduction is approximately $4^\circ$C. In this case, it is arguably preferable to coordinate, as a $4^\circ$C temperature reduction is acceptable for a shorter period of time, whereas $10^\circ$C is too extreme to be tolerated by most consumers. The exact trade-offs and results will depend on the specific system and temperature levels at hand. What is interesting is that both of these behaviors which are optimal under different perspectives are achievable with such simple control techniques.

\section{Main Proofs}\label{sec:proofs}
We will now move on to prove the main theoretical results of this paper as presented in section \ref{sec:main results}. We will prove the stability results for both proposed controllers, i.e., Theorems \ref{thm: decentralized stability} and \ref{thm: coordinating stability}, followed by the optimality of their equilibria, i.e., Theorems \ref{thm: decentralized optimality} and \ref{thm: coordinating optimality}. First however, we will introduce a few extra properties of the interconnection $b$ which are required for the subsequent proofs.

\subsection{Additional Nonlinearity Properties}
The proofs of both of the following Lemmas are found in the Appendix.

\begin{lemma}\label{lem:sum of b}
    Let $b$ satisfy Assumption \ref{ass:assumptions on b}. Consider any pair $v \in \mathcal{S}$, $\Tilde{v} \in \mathcal{S}$ where $v \neq \Tilde{v}$. Let $\mathcal{I}^\pm = \{ i \in 1,\dots,n \, | \, v_i \neq \Tilde{v}_i \}$ and let $\mathcal{I}^0 = \{ i \in 1,\dots,n \, | \, v_i = \Tilde{v}_i \}$. Then
    \begin{align}
        &\sum_{i \in \mathcal{I}^\pm}\eta_i\sign{\Tilde{v}_i - v_i} \left(b_i(\Tilde{v}) - b_i(v)\right) \nonumber\\
        &> \sum_{k \in \mathcal{I}^0}\eta_k|b_k(\Tilde{v})-b_k(v)|. \label{eq:sum of b with signs}
    \end{align}
\end{lemma}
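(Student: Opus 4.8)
\emph{Proof sketch.} The obstacle is that $v$ and $\Tilde{v}$ are in general not comparable, so Assumption~\ref{ass:assumptions on b} cannot be applied to the pair $(v,\Tilde{v})$ directly. The plan is to factor the passage from $v$ to $\Tilde{v}$ through their componentwise maximum $w \in \mathcal{S}$, defined by $w = v + \pospart{\Tilde{v} - v}$, i.e. $w_i = \max(v_i,\Tilde{v}_i)$; this belongs to $\mathcal{S}$ because $\mathcal{S}$ is a box. Writing $\mathcal{I}^+ := \{ i : \Tilde{v}_i > v_i \}$ and $\mathcal{I}^- := \{ i : \Tilde{v}_i < v_i \}$, so that $\mathcal{I}^\pm = \mathcal{I}^+ \cup \mathcal{I}^-$, we have $w \geq v$ with $w_i = v_i$ exactly on $\mathcal{I}^- \cup \mathcal{I}^0$, and $w \geq \Tilde{v}$ with $w_i = \Tilde{v}_i$ exactly on $\mathcal{I}^+ \cup \mathcal{I}^0$. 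Hence both $v \to w$ and $\Tilde{v} \to w$ are monotone increases to which Assumption~\ref{ass:assumptions on b} applies.

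Next I would record what each monotone step gives. Applying part~$(ii)$ of Assumption~\ref{ass:assumptions on b} to $(\vu,\vl) = (w,v)$ and part~$(i)$ to every $i \in \mathcal{I}^- \cup \mathcal{I}^0$ (the $i$th coordinate being held fixed there, which forces $b_i(w) - b_i(v) < 0$), a rearrangement replacing those negative terms by their absolute values yields
\[
    \sum_{i \in \mathcal{I}^+} \eta_i \left( b_i(w) - b_i(v) \right) \geq \sum_{i \in \mathcal{I}^-} \eta_i \left( b_i(v) - b_i(w) \right) + \sum_{k \in \mathcal{I}^0} \eta_k \left| b_k(w) - b_k(v) \right| ,
\]
strict whenever $w \neq v$ (and trivial, $0 \geq 0$, when $w = v$). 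Symmetrically, with $(\vu,\vl) = (w,\Tilde{v})$ and part~$(i)$ applied to every $i \in \mathcal{I}^+ \cup \mathcal{I}^0$,
\[
    \sum_{i \in \mathcal{I}^-} \eta_i \left( b_i(w) - b_i(\Tilde{v}) \right) \geq \sum_{i \in \mathcal{I}^+} \eta_i \left( b_i(\Tilde{v}) - b_i(w) \right) + \sum_{k \in \mathcal{I}^0} \eta_k \left| b_k(w) - b_k(\Tilde{v}) \right| ,
\]
strict whenever $w \neq \Tilde{v}$. Since $v \neq \Tilde{v}$, we cannot have both $w = v$ and $w = \Tilde{v}$, so at least one of the two displayed inequalities is strict.

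Finally I would add the two inequalities and reassemble the left-hand side of~\eqref{eq:sum of b with signs}. Using $\sign{\Tilde{v}_i - v_i} = +1$ on $\mathcal{I}^+$, $= -1$ on $\mathcal{I}^-$, and the telescoping $b_i(\Tilde{v}) - b_i(v) = \left( b_i(\Tilde{v}) - b_i(w) \right) + \left( b_i(w) - b_i(v) \right)$, one checks that the left-hand side of~\eqref{eq:sum of b with signs} equals the sum of the two displayed left-hand sides plus $\sum_{i\in\mathcal{I}^+}\eta_i\left(b_i(\Tilde{v}) - b_i(w)\right) + \sum_{i\in\mathcal{I}^-}\eta_i\left(b_i(v) - b_i(w)\right)$, and that these last two sums are again nonnegative by part~$(i)$ of Assumption~\ref{ass:assumptions on b} (along $\Tilde{v}\to w$ and $v \to w$ the relevant coordinates are held fixed, so those components of $b$ do not increase). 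Lower-bounding the two displayed left-hand sides by the two inequalities and discarding the nonnegative remainder, the left-hand side of~\eqref{eq:sum of b with signs} is at least $\sum_{k\in\mathcal{I}^0}\eta_k\left( \left|b_k(w)-b_k(v)\right| + \left|b_k(w)-b_k(\Tilde{v})\right| \right) \geq \sum_{k\in\mathcal{I}^0}\eta_k\left|b_k(\Tilde{v})-b_k(v)\right|$ by the triangle inequality, and the bound is strict since one of the two added inequalities is. The only genuine work is the sign bookkeeping over the partition $\mathcal{I}^+,\mathcal{I}^-,\mathcal{I}^0$ and tracking where strictness enters; the case in which $v$ and $\Tilde{v}$ are comparable needs no separate argument, being absorbed by the "at least one strict" observation.
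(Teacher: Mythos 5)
Your proof is correct: the two monotone comparisons $v \le w$ and $\Tilde{v} \le w$ with $w_i = \max(v_i,\Tilde{v}_i)$ are legitimate (the box $\mathcal{S}$ is closed under componentwise maxima), the fixed-coordinate sets are identified correctly, and the strictness bookkeeping works because $v \neq \Tilde{v}$ rules out having both $w = v$ and $w = \Tilde{v}$. The route is genuinely different from the paper's, though built from the same two ingredients. The paper sets $\mu = \Tilde{v} - v$ and routes through \emph{both} lattice operations: using part $(i)$ of Assumption~\ref{ass:assumptions on b} coordinate by coordinate, it bounds each term $\sign{\mu_i}\left(b_i(\Tilde{v}) - b_i(v)\right)$ (and each $-|b_k(\Tilde{v})-b_k(v)|$ on $\mathcal{I}^0$) from below by $b_i(v + \pospart{\mu}) - b_i(v + \negpart{\mu})$, and then applies part $(ii)$ exactly once, to the comparable pair formed by the componentwise maximum $v + \pospart{\mu}$ and the componentwise minimum $v + \negpart{\mu}$. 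You factor only through the maximum and apply $(ii)$ twice, once along $v \to w$ and once along $\Tilde{v} \to w$, paying for the second application with a triangle inequality on $\mathcal{I}^0$ at the end. Neither argument is more general, and both place the entire burden on the observation that part $(i)$ controls every coordinate that a monotone step holds fixed; the paper's single use of $(ii)$ yields a slightly tighter final bound (your last estimate discards the doubled $\mathcal{I}^0$ contribution together with two further nonnegative surplus sums), whereas your version has the small expository advantage that each invocation of $(ii)$ compares $w$ directly with one of the original points, so the fixed-coordinate sets $\mathcal{I}^- \cup \mathcal{I}^0$ and $\mathcal{I}^+ \cup \mathcal{I}^0$ can be read off immediately.
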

This lemma is proven in the appendix. An interpretation of this lemma is that the individual change in output $b_i(v) - b_i(\Tilde{v})$ goes mostly along the same sign as the corresponding individual change in input $v_i - \Tilde{v}_i$. This value for all agents who have changed their inputs ($\mathcal{I}^\pm$) dominates the change in output affecting all of the agents who did not change their inputs ($\mathcal{I}^0$). We continue with the following property of $b$.

\begin{lemma}\label{lem:b inverse positive}
    Let $b$ satisfy Assumption \ref{ass:assumptions on b}. Then for any pair $\vu \in \mathcal{S}$ and $\vl \in \mathcal{S}$ where $\vu \neq \vl$, if $b(\vu) \geq b(\vl)$, then $\vu > \vl$.
\end{lemma}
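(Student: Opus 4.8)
The plan is to argue directly: assuming $b(\vu) \geq b(\vl)$ and $\vu \neq \vl$, I would first establish the componentwise inequality $\vu \geq \vl$, and then upgrade it to the strict one $\vu > \vl$. Throughout I will repeatedly use that $b(\vu) \geq b(\vl)$ together with $\eta > 0$ implies $\eta^\top\bigl(b(\vu) - b(\vl)\bigr) \geq 0$.

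For the first (and main) step, introduce the componentwise maximum $W := \vu \vee \vl$, i.e. $W_i = \max(\vu_i, \vl_i)$. Since $\mathcal{S}$ is an $n$-dimensional box (as noted after Assumption \ref{ass:assumptions on b}) and $\vu, \vl \in \mathcal{S}$, we have $W \in \mathcal{S}$, so $b(W)$ is defined and Assumption \ref{ass:assumptions on b} applies to the ordered pairs $(W,\vu)$ and $(W,\vl)$. Suppose, for contradiction, that $\vu \not\geq \vl$; equivalently $W \neq \vu$. Split into two cases. If moreover $\vu \leq \vl$ (so $W = \vl$), then $\vl \geq \vu$ and $\vl \neq \vu$, and Assumption \ref{ass:assumptions on b}(ii) gives $\eta^\top\bigl(b(\vl) - b(\vu)\bigr) > 0$, contradicting $\eta^\top\bigl(b(\vu)-b(\vl)\bigr) \geq 0$. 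Otherwise $W \neq \vl$ as well; then Assumption \ref{ass:assumptions on b}(ii) applied to $(W,\vu)$ yields $\eta^\top b(W) > \eta^\top b(\vu)$, while I claim $b(W) < b(\vu)$ componentwise, which gives $\eta^\top b(W) < \eta^\top b(\vu)$ and hence a contradiction. To verify the claim, fix an index $i$: if $W_i = \vu_i$, then Assumption \ref{ass:assumptions on b}(i) applied to $(W,\vu)$ (valid since $W \geq \vu$, $W \neq \vu$) gives $b_i(W) - b_i(\vu) < 0$; if instead $W_i > \vu_i$, then $W_i = \vl_i$, and Assumption \ref{ass:assumptions on b}(i) applied to $(W,\vl)$ (valid since $W \geq \vl$, $W \neq \vl$) gives $b_i(W) < b_i(\vl) \leq b_i(\vu)$, the last inequality being the hypothesis.

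For the second step, knowing now that $\vu \geq \vl$ with $\vu \neq \vl$, any index $i$ with $\vu_i = \vl_i$ would give $b_i(\vu) - b_i(\vl) < 0$ by Assumption \ref{ass:assumptions on b}(i), contradicting $b(\vu) \geq b(\vl)$. Hence $\vu_i > \vl_i$ for every $i$, i.e. $\vu > \vl$.

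The main obstacle I anticipate is bookkeeping rather than conceptual: one must ensure that every invocation of Assumption \ref{ass:assumptions on b} is on a genuinely ordered and distinct pair — this is precisely why the degenerate sub-case $W = \vl$ (and the trivial sub-case $W = \vu$, which is the conclusion we want) has to be peeled off — and one must correctly identify, index by index, whether $W_i$ equals $\vu_i$ or $\vl_i$ so that part (i) is applied with the matching coordinate. An alternative route would be to substitute $v = \vl$, $\tilde v = \vu$ into Lemma \ref{lem:sum of b}, but the sign pattern there does not collapse as cleanly, so I would favour the direct argument above.
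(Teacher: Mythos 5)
Your proof is correct. It is essentially a mirror image of the paper's argument: both proofs first establish $\vu \geq \vl$ by contradiction, by introducing an auxiliary point of the box $\mathcal{S}$ that is ordered with respect to both $\vu$ and $\vl$ so that Assumption \ref{ass:assumptions on b} (which only constrains ordered pairs) becomes applicable, and both finish with the identical observation that any coordinate with $\vu_i = \vl_i$ would violate part (i). The difference lies in the choice of auxiliary point and in which half of the hypothesis gets contradicted. The paper uses the componentwise \emph{minimum} $\negpart{\vu - \vl} + \vl$: part (ii) applied to the pair $(\vl, \negpart{\vu-\vl}+\vl)$ produces a single coordinate $k$ with $\vu_k < \vl_k$ at which $b_k$ decreases, and one further application of (i) yields $b_k(\vu) < b_k(\vl)$, contradicting the componentwise hypothesis $b(\vu) \geq b(\vl)$ at that one coordinate. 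You instead use the componentwise \emph{maximum} $W$, derive the full componentwise bound $b(W) < b(\vu)$ from (i) together with the hypothesis, and contradict the aggregate inequality (ii) applied to $(W, \vu)$. The two routes are equally elementary and of comparable length; yours pays the small price of having to peel off the degenerate sub-case $\vu \leq \vl$ separately (the paper's choice of auxiliary point handles all sub-cases uniformly), and gains a slightly cleaner contradiction that lands on one scalar inequality rather than requiring a search for a bad coordinate. Your bookkeeping of when each part of Assumption \ref{ass:assumptions on b} applies (ordered, distinct pairs; matching coordinates) is all in order.
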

This lemma is proven in the appendix. In the linear case where $b$ is an M-matrix ($b(v) = Bv$), this property can be likened with positivity of the inverse of this matrix ($B\inv > 0$ element-wise). 

\subsection{Stability Proofs} \label{sec:stability proofs}

We will prove both Theorem \ref{thm: decentralized stability} and \ref{thm: coordinating stability} using Lyapunov-based arguments. To do so, we will first introduce a change of coordinates from $(x,z)$ to $(\zeta,u)$, where $u_i = -\kpi x_i - \kii z_i$ and $\zeta_i = - \kii z_i$ for $i = 1,\dots,n$. We also introduce the matrices $P = \diagm{\kp}$, $R = \diagm{\ki}$, $C = R P\inv$ and $D = \diagm{a} - C$. Note that $P$, $R$, $C$ and $D$ are all positive definite, diagonal matrices under either Assumption \ref{ass:decentralized controller tuning} or Assumption \ref{ass:coordinating controller tuning}. The closed-loop system in these new coordinates is then given by
\begin{equation}
    \begin{bmatrix}
        \dot{\zeta} \\
        \dot{u}
    \end{bmatrix} = \begin{bmatrix}
        -C &C \\ D &-D
    \end{bmatrix}\begin{bmatrix}
        \zeta \\
        u
    \end{bmatrix} - \begin{bmatrix}
        0 \\
        P
    \end{bmatrix}b(\sat{u}) - \begin{bmatrix}
        0 \\
        P
    \end{bmatrix}w - \begin{bmatrix}
        R \\
        R
    \end{bmatrix}S \dz{u} .
    \label{eq:changed coordinates}
\end{equation}
Here $S$ denotes the anti-windup compensation, i.e., $S = \diagm{\kai}$ for the decentralized controller and $S = \kc \ones \ones^\top$ for the coordinating controller.

We will first prove stability of the decentralized controller, beginning with ensuring that there exists an equilibrium.
\begin{lemma} \label{lem:decentralized equilibrium}
    Let Assumptions \ref{ass:assumptions on b} and \ref{ass:decentralized controller tuning} hold. Then the decentralized closed loop system \eqref{eq:decentralized closed loop} has at least one equilibrium.
\end{lemma}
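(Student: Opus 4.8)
The plan is to establish existence of an equilibrium for the decentralized closed loop by a fixed-point argument on the steady-state relations. At an equilibrium we need $\dot x = \dot z = 0$, which in the original coordinates means $a_i x_i^0 = b_i(\sat{u^0}) + w_i$ and $x_i^0 = -\kai \dzi{u^0}$ for each $i$, together with $u_i^0 = -\kpi x_i^0 - \kii z_i^0$. Working in the $(\zeta,u)$ coordinates of \eqref{eq:changed coordinates} is cleaner: setting $\dot\zeta = \dot u = 0$ in \eqref{eq:changed coordinates} and eliminating $\zeta$ (the first block gives $\zeta = u$ when $\dot\zeta=0$, since $C$ is invertible), the $\dot u = 0$ block collapses to a single relation of the form $0 = -P\,b(\sat{u}) - Pw - R\,S\,\dz{u}$, i.e. $b(\sat{u}) + w = -\kpi^{-1}\kii \kai \,\dzi{u}$ componentwise for the decentralized $S=\diagm{\kai}$. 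So I would define a map whose fixed point is exactly such a $u$.

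The key steps, in order: (1) Rewrite the equilibrium condition as a fixed-point equation $u = F(u)$ for a continuous $F:\R^n\to\R^n$. A natural choice: given $u$, set $v = \sat{u}\in\mathcal S$, and solve for a new $u^+$ from $b(v) + w = -C\kai\,\dzi{u^+}$ — but since $\sat{\cdot}$ and $\dz{\cdot}$ are not jointly invertible in an obvious way, I instead would go the other direction and parametrize by $v\in\mathcal S$. Concretely, define $G:\mathcal S\to\mathcal S$ as follows: given $v$, form the "demanded" input $u$ such that the state equation holds, $a_i x_i = b_i(v)+w_i$, then the anti-windup equilibrium forces $x_i = -\kai\dzi{u}$, and $u_i = -\kpi x_i - \kii z_i$ with $\zeta_i = -\kii z_i = u_i + \kpi x_i$... this is getting circular, so the honest move is: treat $v = \sat u$ as the unknown, write $x = A^{-1}(b(v)+w)$, and ask for $u$ with $\sat u = v$ and $\dz u = u - v = -\kai^{-1} x$? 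No — rather $x_i = -\kai \dzi{u} = -\kai(u_i - v_i)$, so $u_i = v_i - x_i/\kai = v_i - (b_i(v)+w_i)/(a_i\kai)$, and then consistency requires $\sat{u} = v$, i.e. $v = \sat{\,v - A^{-1}S^{-1}(b(v)+w)\,}$. So (2) define $H(v) = \sat{\,v - A^{-1}\diagm{\kai}^{-1}(b(v)+w)\,}$, which maps the compact convex box $\mathcal S$ continuously into itself (the saturation guarantees the codomain), and (3) apply Brouwer's fixed-point theorem to get $v^\star$ with $H(v^\star)=v^\star$. Then (4) unwind: set $x^0 = A^{-1}(b(v^\star)+w)$, $u^0 = v^\star - A^{-1}\diagm{\kai}^{-1}x^0$, and $z^0$ from $u^0 = -\kpi x^0 - \kii z^0$; verify $\sat{u^0} = v^\star$ (this is exactly the fixed-point equation) and that $\dot x = \dot z = 0$, which is a direct substitution using $\dzi{u^0} = u_i^0 - v_i^\star = -x_i^0/\kai$.

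The main obstacle I anticipate is step (2)–(3): verifying that the candidate map is genuinely well-defined and continuous as a self-map of $\mathcal S$, and in particular that the fixed point of $H$ really does encode a genuine equilibrium rather than a spurious one (one must check that $\sat{u^0}=v^\star$ at the fixed point, which holds by construction, and that no inconsistency arises from points where $u^0$ lands exactly on the saturation boundary). Continuity of $H$ follows from continuity of $b$ (Assumption \ref{ass:assumptions on b}) and of $\sat\cdot$. Note that Assumption \ref{ass:decentralized controller tuning}, specifically $\kpi\kai < 1$, may be needed to guarantee the construction is consistent, or it may only be needed later for uniqueness/stability; I would keep an eye on whether existence alone needs it. An alternative to Brouwer, avoiding any topological machinery, would be a monotonicity/contraction argument exploiting Lemma \ref{lem:b inverse positive} and the M-matrix-like structure of $b$, but Brouwer on the compact box $\mathcal S$ is the most robust route and uses only continuity of $b$.
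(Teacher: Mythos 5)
Your proposal is correct and, like the paper's own argument, rests on Brouwer's fixed-point theorem applied to the steady-state relation $0 = b_i(\sat{u}) + w_i + a_i\kai\dzi{u}$; the difference lies in the choice of self-map. The paper uses $T(u) = u - \alpha\left(b(\sat{u}) + w + \diagm{a}\diagm{\ka}\dz{u}\right)$ on $u$-space and must argue that a sufficiently large box $\{u \in \R^n : \norm{\infty}{u}\le c\}$ is forward-invariant for sufficiently small $\alpha>0$ (which relies on compactness of $\mathcal{B}$). You instead parametrize by the saturated input $v=\sat{u}$ and define $H(v) = \sat{v - A\inv\diagm{\ka}\inv (b(v)+w)}$ on $\mathcal{S}$; the outer saturation makes $H$ a continuous self-map of the compact convex box $\mathcal{S}$ for free, so no step size or box size needs to be chosen and only continuity of $b$ is used. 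The correspondence between fixed points of $H$ and closed-loop equilibria is exact in both directions, and your unwinding in step (4) does verify $\dot x = \dot z = 0$. Two small remarks: the expression $u^0 = v^\star - A\inv\diagm{\ka}\inv x^0$ in step (4) carries a spurious $A\inv$ --- it should read $u^0 = v^\star - \diagm{\ka}\inv x^0$, consistent with your own earlier line $u_i = v_i - x_i/\kai$ and with your definition of $H$ (since $x^0 = A\inv(b(v^\star)+w)$). And your suspicion about Assumption \ref{ass:decentralized controller tuning} is right: neither $\kpi a_i > \kii$ nor $\kpi\kai<1$ is needed for existence, only positivity of $a_i$ and $\kai$ so that $H$ is well-defined; the tuning conditions enter only in the uniqueness and stability arguments.
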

This can be easily proven with Brouwer's fixed-point theorem. We provide an outline of the proof in the Appendix, omitting the details for brevity. We can now prove stability of the decentralized closed-loop system.
\begin{proof}[Proof of Theorem \ref{thm: decentralized stability}]
    Under the taken assumptions, Lemma \ref{lem:decentralized equilibrium} provides that the original system \eqref{eq:decentralized closed loop}, and thus also \eqref{eq:changed coordinates}, has at least one equilibrium which we denote $(\zeta^0, u^0)$. We introduce the shifted variables $\Tilde{\zeta} = \zeta - \zeta^0$ and $\Tilde{u} = u - u^0$, and the shifted notation $\tsat{u} = \sat{u^0 + \Tilde{u}} - \sat{u^0}$, $\Tilde{b}(\tsat{\Tilde{u}}) = b(\tsat{u} + \sat{u^0}) - b(\sat{u^0})$ and $\tdz{\Tilde{u}} = \dz{u^0 + \Tilde{u}} - \dz{u^0} = \tu - \tsat{u}$. In this coordinate frame, the closed-loop dynamics are
    \begin{equation} 
        \begin{bmatrix}
            \dot{\tz} \\
            \dot{\tu}
        \end{bmatrix} = \begin{bmatrix}
            -C &C \\ D &-D
        \end{bmatrix}\begin{bmatrix}
            \tz \\
            \tu
        \end{bmatrix} - \begin{bmatrix}
            0 \\
            P
        \end{bmatrix}\Tilde{b}(\tsat{\tu}) - \begin{bmatrix}
            R \\
            R
        \end{bmatrix}S \tdz{\tu}.
        \label{eq:shifted coordinates}
    \end{equation}
    The aim is now to show that \eqref{eq:shifted coordinates} is globally, asymptotically stable with regards to the origin, which is equivalent with Theorem \ref{thm: decentralized stability}. Now recall the vector $\eta$ of Assumption \ref{ass:assumptions on b}, which we use to define the following Lyapunov function candidate.
    \begin{equation}
        V(\Tilde{\zeta}, \Tilde{u}) = \sum_{i=1}^n \frac{\eta_i d_i}{p_ic_i}|\Tilde{\zeta}_i| + \frac{\eta_i}{p_i}|\Tilde{u}_i|
    \end{equation}

    While $V$ is not strictly continuously differentiable, we note this as a technicality. The function $| \cdot |$ can be exchanged with an arbitrarily close approximation which is continuously differentiable in the origin. For this proof, we will maintain the convention that 
    \begin{equation}
        \frac{d}{dt}|x| = \sign{x} \dot{x}.
    \end{equation}
    Denote $H = \diagm{\eta}$. We then find that
    \begin{subequations}
    \begin{align}
        &\dot{V}(\tz,\tu) \nonumber\\
        =& \sign{\tz}^\top H D P\inv C\inv \dot{\Tilde{\zeta}} + \sign{\Tilde{u}}^\top H P\inv \dot{\Tilde{u}} \nonumber\\
        =& \sign{\tz}^\top H D P\inv C\inv \left( C(\tu -\tz) - RS\tdz{\tu} \right) \nonumber\\
        & + \sign{\tu}^\top H P\inv\left( D(\tz -\tu) \right. \nonumber\\
        &\left.- P\Tilde{b}(\tsat{\Tilde{u}}) - RS\tdz{\tu} \right) \nonumber\\
        =& -\left( \sign{\tu} - \sign{\tz} \right)^\top H D P\inv\left( \tu - \tz\right)\label{eq:vdot linear} \\
        & - \sign{\tz}^\top H D P\inv C\inv R S\tdz{\tu} \label{eq:vdot cross dz}\\
        & - \sign{\tu}^\top H P\inv R S \tdz{\tu} \label{eq:vdot dz}\\
        & - \sign{\tu}^\top H \Tilde{b}(\tsat{\Tilde{u}}) \label{eq:vdot b}.
    \end{align}
    \end{subequations}
    The terms \eqrefbetween{eq:vdot linear}{eq:vdot cross dz} act fully diagonally, hence we can analyze their sign contribution for each $i \in 1,\dots,n$ individually. If either $\tu_i = 0$ and $\tz_i \neq 0$ or $\tu_i \neq 0$ and $\tz_i = 0$, then clearly \eqref{eq:vdot cross dz} contributes with 0, and \eqref{eq:vdot linear} becomes strictly negative in the non-zero variable. If $\sign{\tu_i} = \sign{\tz_i}$, then \eqref{eq:vdot linear} contributes with $0$, but \eqref{eq:vdot cross dz} contributes with a negative semidefinite term, which is strictly negative if $\tdz{\tu}_i \neq 0$. If $\sign{\tu_i} = -\sign{\tz_i} \neq 0$, \eqref{eq:vdot cross dz} contributes with a positive semidefinite term $\frac{\eta_i d_i r_i s_i}{p_i c_i} | \tdz{\tu}_i |$. However, in this case \eqref{eq:vdot linear} contributes with a strictly negative term $- \frac{\eta_i d_i}{p_i} (|\tz_i| + |\tu_i|)$. This negative term dominates the positive semidefinite term because of Assumption \ref{ass:decentralized controller tuning} where the anti-windup gain is bounded, and because $|\tdz{\tu}_i| \leq |\tu_i|$. Hence the contribution of \eqrefbetween{eq:vdot linear}{eq:vdot cross dz} is negative semidefinite, and strictly negative when $\tdz{\tu} \neq 0$. The term \eqref{eq:vdot dz} is trivially negative semidefinite, and strictly negative when $\tdz{\tu} \neq 0$. Finally, let $\mathcal{I}^0 = \{ i \in 1,\dots,n \, | \, \tsat{\tu}_i = 0 \}$. Note that $\sign{\tu_i} = 0 \implies \sign{\tsat{\tu}_i} = 0$, and if $\sign{\tu_i} \neq 0$ then either $i \in \mathcal{I}^0$ or $\sign{\tsat{\tu}_i} = \sign{\tu_i}$. Hence the term \eqref{eq:vdot b} can be bounded by
    \begin{align}
        &- \sign{\tu}^\top H \Tilde{b}(\tsat{\tu}) \\
        \geq& \sign{\tsat{\tu}}^\top H \Tilde{b}(\tsat{\tu}) \\
        &- \sum_{j \in \mathcal{I}^0} \eta_j |\Tilde{b}_i(\tsat{\tu})|.
    \end{align}
    As Assumption \ref{ass:assumptions on b} holds, we can employ Lemma \ref{lem:sum of b} to show that this expression is strictly negative when $\tsat{\tu} \neq 0$. All together, if $\tu \neq 0$, we must have $\tsat{\tu} \neq 0$, $\tdz{\tu} \neq 0$ or both, in which case $\dot{V}(\tz,\tu)$ will be negative due to the above arguments. If $\tu = 0$, the term \eqref{eq:vdot linear} is negative definite in $\tz$. Hence $\dot{V}(\tz,\tu)$ is negative definite. As $V(\tz, \tu) > 0$ and $\dot{V}(\tz,\tu) < 0$ for any non-zero pair $(\tz, \tu)$ the equilibrium $(\zeta^0, u^0)$ is globally asymptotically stable for \eqref{eq:changed coordinates} and must therefore be unique. This can be translated to a unique equilibrium $(x^0, z^0)$ in the original coordinate frame, thus concluding the proof.
\end{proof}

To prove Theorem \ref{thm: coordinating optimality} regarding stability of the coordinating closed-loop system, we utilize the following result.
\begin{lemma} \label{lem: coordinating unsaturated}
    Assume that Assumptions \ref{ass:assumptions on b} and \ref{ass:coordinating controller tuning} hold. Furthermore, assume that $b(\overline{l}) + w > 0$ and $b(\underline{l}) + w < 0$. Then, from any initial condition, the coordinating closed-loop system \eqref{eq:coordinated closed loop} will converge to a forward invariant set in which the control signal is unsaturated, i.e., $\dz{u} = 0$.
\end{lemma}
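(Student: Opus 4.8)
The plan is to work in the coordinates $(\zeta,u)$ of \eqref{eq:changed coordinates} with $S=\kc\ones\ones^\top$ and to establish the claim in three stages: ultimate boundedness of every trajectory, transience of the dead zone ($\dz u(t)\to 0$), and extraction of a saturation-free forward-invariant set from the $\omega$-limit set. First I would observe that, since $\mathcal B=b(\mathcal S)$ is compact and $w$ is constant, the $x$-subsystem $\dot x_i=-a_ix_i+b_i(\sat u)+w_i$ has bounded forcing, so $x(t)$ enters and remains in a fixed box; moreover the componentwise monotonicity of $b$ implied by Assumption~\ref{ass:assumptions on b} — increasing in its own argument, decreasing in the others, a consequence of combining $(i)$ and $(ii)$ — together with the hypotheses $b(\underline l)+w<0<b(\overline l)+w$ places $0$ in the interior of this ultimate box. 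Using this bound together with the $\zeta$-equation $\dot\zeta_i=-\kii x_i-\kii\kc\ones^\top\dz u$ and the anti-windup action — a persistent net over- or under-saturation drives $\zeta$, hence $u=\zeta-\kp\circ x$, back towards $\mathcal S$ — I would produce a compact forward-invariant set $\Omega_0$ that absorbs every trajectory.

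The core step is to show that inside $\Omega_0$ the dead zone is transient. I would decompose it into over- and under-saturation masses $\Phi^+=\ones^\top\pospart{\dz u}$ and $\Phi^-=-\ones^\top\negpart{\dz u}$, so that $\ones^\top\dz u=\Phi^+-\Phi^-$ and $\norm{1}{\dz u}=\Phi^++\Phi^-$. For an index $i$ with $u_i>\overline l_i$ one has $\sat u_i=\overline l_i$ and $\sat u\le\overline l$, hence by Assumption~\ref{ass:assumptions on b}$(i)$ $b_i(\sat u)\ge b_i(\overline l)$, so the term $-p_i(b_i(\sat u)+w_i)$ in $\dot u_i$ is at most the strictly negative constant $-p_i(b_i(\overline l)+w_i)$ — a restoring action that does not weaken no matter how deep the saturation; symmetrically, for $u_i<\underline l_i$ the same term is at least $-p_i(b_i(\underline l)+w_i)>0$. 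Differentiating $\norm{1}{\dz u}=\Phi^++\Phi^-$ along the flow and inserting these bounds, the bound $|\zeta_i-u_i|=|\kpi x_i|$ inherited from the ultimate box, and the smallness condition $\tfrac{\kc}{2}\ones^\top\kp\le1$ of Assumption~\ref{ass:coordinating controller tuning} — which keeps the coordinating cross-term $r_i\kc(\Phi^+-\Phi^-)$ from dominating — I would conclude that $\norm{1}{\dz u}$ is driven below, and then kept below, any positive level, i.e.\ $\dz u(t)\to 0$.

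Finally, boundedness makes the $\omega$-limit set $L$ of a trajectory nonempty, compact, and invariant under \eqref{eq:coordinated closed loop}, while $\dz u(t)\to 0$ forces $L\subseteq\{(\zeta,u):\dz u=0\}$; hence $L$ is itself a forward-invariant set on which the control is unsaturated, and the trajectory converges to it, as claimed. (This then feeds the proof of Theorem~\ref{thm: coordinating stability}: on $\{\dz u=0\}$ the coordinating anti-windup vanishes and the Lyapunov function from the proof of Theorem~\ref{thm: decentralized stability} — still valid because $C,D,P,R$ remain positive diagonal under Assumption~\ref{ass:coordinating controller tuning} — strictly decreases, so $L$ reduces to the single equilibrium.)

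The main obstacle is the estimate in the second stage. The coordinating anti-windup term $-r_i\kc\ones^\top\dz u$ enters $\dot\zeta_i$ and $\dot u_i$ with the same sign, and when saturation is mixed — some agents above $\overline l$, some below $\underline l$ — it pushes over-saturated agents further into saturation. Controlling it means showing that the smallness of $\kc$ makes this term subordinate both to the strictly negative $b$-contribution above and to the rate at which the anti-windup is simultaneously rescuing the agents on the opposite side, so that $\Phi^+$ and $\Phi^-$ cannot sustain one another; the mild coupling this creates between the boundedness step and the dead-zone-transience step is what must be handled carefully, and it is precisely what the hypotheses $b(\overline l)+w>0$, $b(\underline l)+w<0$ and $\tfrac{\kc}{2}\ones^\top\kp\le1$ are designed to make work.
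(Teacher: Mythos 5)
Your route is genuinely different from the paper's: you propose ultimate boundedness, then a decrease argument for $\norm{1}{\dz{u}} = \Phi^+ + \Phi^-$, then an $\omega$-limit-set extraction, whereas the paper uses a single Lyapunov function $V(\zeta,u) = \tfrac12 \dz{\zeta}^\top D R\inv C\inv \dz{\zeta} + \tfrac12 \dz{u}^\top R\inv \dz{u}$ whose derivative is nonpositive everywhere and strictly negative whenever $\dz{u}\neq 0$. You do identify the same restoring mechanism as the paper (for $u_i > \overline{l}_i$, Assumption \ref{ass:assumptions on b}$(i)$ gives $b_i(\sat{u}) \geq b_i(\overline{l})$, so $-p_i\left(b_i(\sat{u})+w_i\right) \leq -p_i\left(b_i(\overline{l})+w_i\right) < 0$), and your treatment of the coordinating cross-term via the smallness of $\kc$ is in the right spirit.

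However, your core estimate has a genuine gap. In the $(\zeta,u)$ coordinates, $\dot{u}_i$ contains the term $d_i(\zeta_i - u_i) = d_i \kpi x_i$, and you propose to dominate it by the ultimate bound on $x$. Nothing in Assumptions \ref{ass:assumptions on b} and \ref{ass:coordinating controller tuning}, nor in the hypotheses $b(\overline{l})+w>0$ and $b(\underline{l})+w<0$, makes this work: the ultimate bound on $|x_i|$ scales like $\sup_{v\in\mathcal{S}}|b_i(v)+w_i|/a_i$, so $d_i\kpi|x_i|$ is of order $\tfrac{\alpha}{1+\alpha}\,\kpi\sup_{v}|b_i(v)+w_i|$, while the restoring margin is only $\kpi\left(b_i(\overline{l})+w_i\right)$, which may be arbitrarily small compared to the range of $b_i$. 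Hence the differential inequality for $\Phi^+ + \Phi^-$ does not close, and your stage one (existence of a compact absorbing set) is in any case only asserted. This is exactly the difficulty the paper's extra $\dz{\zeta}$ term is built to resolve: pairing the $u$- and $\zeta$-dynamics produces the contribution $-\left(\dz{u}-\dz{\zeta}\right)^\top R\inv D\left(u-\zeta\right)$, which is bounded above by $-\left(\dz{u}-\dz{\zeta}\right)^\top R\inv D\left(\dz{u}-\dz{\zeta}\right)\leq 0$ by monotonicity of the dead-zone, independently of the size of $x$; the coordinating term is then absorbed via the Schur-complement condition $\tfrac{\kc}{2}\ones^\top\kp\leq 1$. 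To repair your argument you would either need an extra quantitative assumption relating $\alpha$ to the margins $b(\overline{l})+w$ and $-(b(\underline{l})+w)$, or you should augment your candidate $\norm{1}{\dz{u}}$ with a matching function of $\dz{\zeta}$, which essentially reproduces the paper's proof.
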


\begin{proof}
For this proof, we will employ the coordinate frame $(\zeta, u)$ as in \eqref{eq:changed coordinates}, now with $S = \kc \ones \ones^\top$. Consider the Lyapunov function candidate
\begin{align}
    V(\zeta, u) =& \frac{1}{2}\dz{\zeta}^\top D R\inv C\inv \dz{\zeta} \nonumber\\
    &+ \frac{1}{2}\dz{u}^\top  R\inv \dz{u}.
\end{align}
Here $\dz{\zeta}$ is to be understood to element-wise use the same bounds $\overline{l}$ and $\underline{l}$ as $\dz{u}$, i.e., if $\zeta = u$ then $\dz{\zeta} = \dz{u}$. Note that 
\begin{equation}
    \frac{1}{2} \cdot \frac{d}{dt}\dz{x}^2 = \dz{x} \frac{\partial \dz{x}}{\partial x} \dot{x} = \dz{x} \dot{x} ,
\end{equation}
as $\frac{\partial \dz{x}}{\partial x} = 0$ when $\dz{x} = 0$, and $\frac{\partial \dz{x}}{\partial x} = 1$ when $\dz{x} \neq 0$. Hence we find that 
\begin{align}
     &\dot{V}(\zeta,u) \nonumber\\
     =& \dz{\zeta}^\top D R\inv C\inv \dot{\zeta} + \dz{u}^\top  R\inv \dot{x} \nonumber\\
     =&  \dz{\zeta}^\top D R\inv C\inv \left( C( u - \zeta) - \kc R \ones \ones^\top\right) \nonumber\\
     & + \dz{u}^\top  R\inv \left(  D( \zeta - u) \right. \nonumber \\
     &- \left. P\left( b(\sat{u}) + w\right) - \kc R \ones \ones^\top\right) \nonumber\\
     = & - \left( \dz{u} - \dz{\zeta}\right)^\top R\inv D \left(u - \zeta\right) \label{eq:Vdot coord linear}\\
     & - \kc \dz{\zeta}^\top  C\inv D  \ones \ones^\top \dz{u} \label{eq:Vdot coord cross}\\
     & - \kc \dz{u}^\top \ones \ones^\top \dz{u} \label{eq:Vdot coord dz}\\
     & - \dz{u}^\top R\inv P \left( b(\sat{u}) + w\right) \label{eq:Vdot coord b}
\end{align}
We will begin with the last term \eqref{eq:Vdot coord b}, which is strictly negative when $\dz{u} \neq 0$. We can see this by noting that $R\inv P$ is a positive diagonal matrix, and then identifying that for any $i \in 1,\dots,n$, if $\dzi{u} > 0$, $\sati{u} = \overline{l}_i$, and thus Assumption \ref{ass:assumptions on b} $(i)$ yields that
\begin{equation*}
    b_i(\sat{u}) + w_i \geq b_i(\overline{l}) + w_i > 0.
\end{equation*}
The opposite relation can be shown when $\dzi{u} < 0$. For the remaining terms, we first note that 
\begin{align}
    &\left( \dz{u} - \dz{\zeta}\right)^\top R\inv D \left(u - \zeta\right) \nonumber \\
    \geq &\left( \dz{u} - \dz{\zeta}\right)^\top R\inv D \left( \dz{u} - \dz{\zeta}\right),
\end{align}
which can trivially be shown by utilizing the definitions of $\sat{\cdot}$ and $\dz{\cdot}$. Furthermore, Assumption \ref{ass:coordinating controller tuning} yields that $C\inv D = \alpha I$ and $R\inv D = \alpha P\inv$. Hence \eqref{eq:Vdot coord linear}--\eqref{eq:Vdot coord dz} can be combined and upper bounded by 
\begin{align}
    &-\alpha\left( \dz{u} - \dz{\zeta}\right)^\top P\inv \left( \dz{u} - \dz{\zeta}\right)\nonumber \\
    &- \alpha \kc \dz{\zeta}^\top  \ones \ones^\top \dz{u} \nonumber\\
    &- \kc \dz{u}^\top \ones \ones^\top \dz{u}\nonumber \\
    =& -\alpha\left( \dz{u} - \dz{\zeta}\right)^\top (P\inv - \frac{\kc}{2}\ones \ones^\top) \left( \dz{u} - \dz{\zeta}\right) \nonumber\\
    & -\frac{\kc \alpha}{2} \dz{v}^\top \ones \ones^\top \dz{v} \nonumber\\
    & -\kc \left(1 + \frac{\alpha}{2}\right) \dz{u}^\top \ones \ones^\top \dz{u}.\nonumber
\end{align}
This expression is negative semi-definite under the condition that $\left(P\inv - \frac{\kc}{2}\ones \ones^\top\right) \succeq 0$. This is equivalent to 
\begin{equation}
    \begin{bmatrix}
        P\inv & \ones \\
        \ones^\top & \frac{2}{\kc}
    \end{bmatrix} \succeq 0, \label{eq:positive semidefinite condition}
\end{equation}
as $(P\inv - \frac{\kc}{2}\ones \ones^\top)$ is the Schur complement of this matrix. The condition \eqref{eq:positive semidefinite condition} is once again equivalent to 
\begin{equation}
    \frac{2}{\kc} - \ones^\top P \ones \geq 0 \iff  \frac{\kc}{2}\ones^\top P \ones = \frac{\kc}{2}\ones^\top \kp \geq 0.
\end{equation}
This holds, due to Assumption \ref{ass:coordinating controller tuning}. Hence $(P\inv - \frac{\kc}{2}\ones \ones^\top) \succeq 0$ and thus \eqref{eq:Vdot coord linear}--\eqref{eq:Vdot coord dz} is negative semi-definite. Therefore we have shown that $V(\zeta, u) > 0$ when $\dz{u} \neq 0$ and $\dot{V}(\zeta, u) < 0$ when $\dz{u} \neq 0$. Therefore the region where $\dz{u} = 0$ is globally attracting and forward invariant.
\end{proof}

\begin{proof}[Proof of Theorem \ref{thm: coordinating stability}]
    Lemma \ref{lem: coordinating unsaturated} proves that all trajectories of the closed-loop system will converge to, and remain in, the region where $\dz{u} = 0$. Here, the closed-loop systems of the coordinating and decentralized controllers are equivalent, and Assumption \ref{ass:coordinating controller tuning} implies that also Assumption \ref{ass:decentralized controller tuning} holds (disregarding the statements about the anti-windup gains, as they are inactive). Hence, we can invoke Theorem \ref{thm: decentralized stability} which applies to the decentralized closed-loop system. 
\end{proof}

\subsection{Optimality Proofs} \label{sec:optimality proofs}

We continue now to prove Theorems \ref{thm: decentralized optimality} and \ref{thm: coordinating optimality}.

\begin{proof}[Proof of Theorem \ref{thm: decentralized optimality}]
    To simplify notation, we will introduce $v^0 = \sat{u^0}$ and $v^\dagger = \sat{u^\dagger}$. Note that $v^\dagger \neq v^0$ by assumption. As both pairs $(x^0, u^0)$ and $(x^\dagger, u^\dagger)$ satisfy \eqref{eq:agent dynamics} with $\dot{x} = 0$, we can conclude that 
    \begin{equation*}
        \eta_ia_ix^\dagger_i = \eta_i(b_i(v^\dagger) + w_i) 
        = \eta_ia_ix_i^0 + \eta_i\left( b_i(v^\dagger)-b_i(v^0) \right)
    \end{equation*}
    for all $i = 1,\dots,n$. Therefore
    \begin{align}
        & \sum_{i=1}^n \eta_i a_i |x_i^\dagger| \nonumber\\
        =&  \sum_{i=1}^n \eta_i |a_ix_i^0 + b_i(v^\dagger)-b_i(v^0)| \nonumber\\
        \geq&  \sum_{i \in \mathcal{I}^\pm} \eta_ia_i |x_i^0| + \eta_i\sign{x_i^0}\left(b_i(v^\dagger)-b_i(v^0)\right) \nonumber\\
        &+  \sum_{j \in \mathcal{I}^0} \eta_j|b_j(v^\dagger)-b_j(v^0)| \label{eq:cost x index}
    \end{align}
    where $\mathcal{I}^\pm = \{i \in 1,\dots,n \, | \, x_i^0 \neq 0 \}$ and $\mathcal{I}^0 = \{i \in 1,\dots,n \, | \, x_i^0 = 0 \}$. As $(x^0, u^0)$ satisfies \eqref{eq:decentralized z} with $\dot{z} = 0$, and $\ka > 0$, we know that $\sign{x^0} = -\sign{\dz{u^0}}$. Hence we can continue to expand \eqref{eq:cost x index} to
    \begin{align}
        & \sum_{i=1}^n \eta_i a_i |x_i^0| \nonumber\\
        & +\sum_{i \in \mathcal{I}^\pm} \eta_i\sign{\dzi{u^0}}\left(b_i(v^0)-b_i(v^\dagger)\right) \nonumber\\
        & +\sum_{j \in \mathcal{I}^0}\eta_j|b_j(v^0)-b_j(v^\dagger)|.\label{eq:cost dz index}
    \end{align}
    We would here like to apply Lemma \ref{lem:sum of b}, but this requires the sets $\mathcal{J}^\pm = \{i \in 1,\dots,n \, | \, v_i^0 \neq v_i^\dagger \}$ and $\mathcal{J}^0 = \{i \in 1,\dots,n \, | \, v_i^0 = v_i^\dagger \}$. To continue, we note the following. For all $i \in \mathcal{I}^\pm$, $\dzi{u^0} \neq 0$, and hence $v_i^0 = \overline{l}_i$ or $v_i^0 = \underline{l}_i$. Therefore the following statements hold.
    \begin{align}
       & \eta_i\sign{\dzi{u^0}}\left(b_i(v^0)-b_i(v^\dagger)\right)  &&\nonumber\\
       =& \eta_i\sign{v^\dagger_i - v^0_i}\left(b_i(v^0)-b_i(v^\dagger)\right),   
       &&\forall i \in  \mathcal{I}^\pm \cap \mathcal{J}^\pm \\
        &\eta_i\sign{\dzi{u^0}}\left(b_i(v^0)-b_i(v^\dagger)\right)  &&\nonumber\\
        \geq&  -\eta_i\left|b_i(v^0)-b_i(v^\dagger)\right|,  
        &&\forall i \in  \mathcal{I}^\pm \cap \mathcal{J}^0 \\
        &\eta_i\left|b_i(v^0)-b_i(v^\dagger)\right|  &&\nonumber\\ 
        \geq& \eta_i\sign{\dzi{u^0}}\left(b_i(v^0)-b_i(v^\dagger)\right) , 
        &&\forall i \in  \mathcal{I}^0 \cap \mathcal{J}^\pm \\
        &\eta_i\left|b_i(v^0)-b_i(v^\dagger)\right|  &&\nonumber\\
        \geq& -\eta_i\left|b_i(v^0)-b_i(v^\dagger)\right| , 
        &&\forall i \in  \mathcal{I}^0 \cap \mathcal{J}^0 . 
    \end{align}
    We can use these relations to reorganize and upper-bound the sums in \eqref{eq:cost dz index}, and therefore state that
    \begin{align}
        & \sum_{i=1}^n \eta_i a_i |x_i^\dagger| \nonumber\\
        \geq & \sum_{i=1}^n \eta_i a_i |x_i^0| \nonumber\\
        & +\sum_{i \in \mathcal{J}^\pm} \eta_i\sign{v^0 -v_i^\dagger}\left(b_i(v^0)-b_i(v^\dagger)\right) \nonumber\\
        & -\sum_{j \in \mathcal{J}^0}\eta_j|b_j(v^0)-b_j(v^\dagger)| \nonumber \\
        > & \sum_{i=1}^n \eta_i a_i |x_i^0|.
    \end{align}
    The final inequality derives from Lemma \ref{lem:sum of b}, as we assume that Assumption \ref{ass:assumptions on b} holds and $v^0 \neq v^\dagger$. 
\end{proof}

\begin{proof}[Proof of Theorem \ref{thm: coordinating optimality}]
    Since both $(x^0, u^0)$ and $(x^\dagger, u^\dagger)$ solve \eqref{eq:agent dynamics} with $\dot{x} = 0$, we know that
    \begin{align}
        x^\dagger =& A\inv \left( b(\sat{u^\dagger}) + w \right) \nonumber \\
        =& x^0 +  A\inv \left( b(\sat{u^\dagger})  -  b(\sat{u^0})\right), \label{eq:xdagger expression}
    \end{align}
    where $A = \diagm{a}$. We also know that
    \begin{equation}
        x^0 = - \kc \ones \ones^\top \dz{u^0}
        \label{eq:steady state x0}
    \end{equation}
    because $(x^0, u^0)$ is an equilibrium for the closed-loop system \eqref{eq:coordinated closed loop} and thus satisfy \eqref{eq:coordinated z} with $\dot{z} = 0$. We therefore know that $x^0$ is proportional to the vector $\ones$, and thus $\norm{\infty}{x^0} = \max_i |x_i^0|$ is maximized by all $i = 1,\dots,n$ simultaneously. Consider first the case where $x^0 > 0$. Then the contradictory notion that $\norm{\infty}{x^\dagger} \leq \norm{\infty}{x^0}$ would therefore require that $x^\dagger \leq x^0$, and by \eqref{eq:xdagger expression} also $ b(\sat{u^\dagger})  -  b(\sat{u^0}) \leq 0$. This is however impossible, because under Assumption \ref{ass:assumptions on b}, Lemma \ref{lem:b inverse positive} states that $b(\sat{u^\dagger})  -  b(\sat{u^0})  \leq 0$ requires $\sat{u^\dagger} < \sat{u^0}$. This is incompatible with \eqref{eq:steady state x0}, stating that $\ones^\top \dz{u^0} < 0$, i.e., there must exist at least one $i \in 1,\dots,n$ such that $\sati{u^0} = \underline{l}_i$, which means that necessarily $\sati{u^\dagger} \geq \sati{u^0}$, establishing a contradiction. We can make a symmetric argument to discard the possibility that $x^0 < 0$ and $x^\dagger \geq x^0$. Thus the only remaining option is $x^0 = x^\dagger = 0$, which requires $b(\sat{u^\dagger})  =  b(\sat{u^0})$, and thus $u^\dagger = u^0$, contradicting the assumption of the theorem statement. This concludes the proof.
\end{proof}

\section{Conclusion} \label{sec:conclusion}
In this paper we considered a particular class of multi-agent systems, where the agents are connected through a capacity-constrained nonlinearity. For this type of system, we considered two proportional-integral controllers equipped with anti-windup compensation: One which was fully decentralized, and one in which the anti-windup compensator introduces a rank-1 coordinating term. We showed that the equilibria of these two closed-loop system were optimal, in the sense that they minimized the size of the control errors $x$ in terms of the costs $\sum a_i \eta_i |x_i|$ and $\norm{\infty}{x}$ respectively. Additionally, we showed that the fully decentralized strategy provides guarantees of global, asymptotic stability with regards to a unique equilibrium. For the coordinating controller, we demonstrated global asymptotic stability when the disturbance can be rejected.

To demonstrate the applicability of the considered model, we showed how it can capture the indoor temperatures of buildings connected through a district heating network. In this setting, the capacity-constrained nonlinear interconnection consists of the hydraulics mapping the valve positions of each building to the resulting flow rates in the system. We demonstrated in a numerical example how the two considered controllers could then achieve different design goals - minimizing the average or the worst-case temperature deviations in the system respectively.

There are plenty of outlooks for future work: The internal agent dynamics are currently simple and on the form $-a_i x_i$. This could perhaps be extended to more complex dynamics, where some stability assumptions are placed on the dynamics of each agent.
Another outlook is analyzing the transient behavior of these systems. This would include understanding the effect of slowly time-varying $w(t)$ and $b(v, t)$. In the district heating setting which we considered in this paper, this would account for changes in outdoor temperature and changes in the supply-temperature in the network. 
The cost-functions which are asymptotically minimized by the considered controllers could perhaps be generalized. Developments in this direction include design of controllers which maintain scalability and structure when considering other cost functions, as well as quantifying the suboptimality attained in utilizing one of the considered controllers in this paper for other cost functions.
Finally, stronger stability guarantees can likely be established for the coordinating controller, which are applicable even when the stabilized equilibrium lies in the saturated domain.

\section{Acknowledgments}                              
This work is funded by the European Research Council (ERC) under the European Union's Horizon 2020 research and innovation program under grant agreement No 834142 (ScalableControl). 

The authors are members of the ELLIIT Strategic Research Area at Lund University.

\bibliographystyle{ieeetr}
\bibliography{bibliography}

\section*{Appendix: Remaining Proofs}
\begin{proof}[Proof of Lemma \ref{lem:sum of b}.]
    Introduce the difference between the inputs $\mu$ as $\mu = \Tilde{v} - v$. 
    We will then split the set $\mathcal{I}^\pm$ into the two sets $\mathcal{I}^+ = \{i \in 1,\dots,n \,| \, \sign{\mu_i} = 1 \}$ and $\mathcal{I}^- = \{i \in 1,\dots,n \,| \, \sign{\mu_i} = -1 \}$. For all $i \in \mathcal{I}^+$, we can invoke Assumption \ref{ass:assumptions on b} $(i)$ to state that 
    \begin{align}
        b_i(v + \mu) \geq b_i(v + \pospart{\mu}), 
        b_i(v) \leq b_i(v + \negpart{\mu}).
    \end{align}
    Conversely, for any $j \in \mathcal{I}^-$,
    \begin{align}
        b_j(v + \mu) \leq b_j(v + \negpart{\mu}), 
        b_j(v) \geq b_j(v + \pospart{\mu}).
    \end{align}
    Finally for all $k \in \mathcal{I}^0$, Assumption \ref{ass:assumptions on b} $(i)$ implies 
    \begin{align}
        |b_k(v + \mu) - b_k(v)| \geq& b_k(v + \mu) - b_k(v) \nonumber \\
        \geq& b_k(v + \pospart{\mu}) - b_k(v + \negpart{\mu}).
    \end{align}
    Hence can expand \eqref{eq:sum of b with signs} as
    \begin{align}
        &\sum_{i \in \mathcal{I}^+} \eta_i\left(b_i(v + \mu) -b_i(v)\right) \nonumber \\
        &+ \sum_{j \in \mathcal{I}^-} \eta_j\left(b_j(v) -b_j(v + \mu)\right)  \\
        &- \sum_{k \in \mathcal{I}^0} \eta_k|b_k(v + \mu) -b_k(v)| \nonumber\\
        \geq& \sum_i \eta_i\left(b_i(v + \pospart{\mu}) -b_i(v + \negpart{\mu})\right) \\
        =& \eta^\top \left( b(v + \pospart{\mu}) - b(v + \negpart{\mu}) \right).
    \end{align}
    By Assumption \ref{ass:assumptions on b} $(ii)$, this quantity is strictly positive when $\pospart{\mu} - \negpart{\mu} \geq 0$, which holds for any $\mu \neq 0$, thus concluding the proof.
     
\end{proof}

\begin{proof}[Proof of Lemma \ref{lem:b inverse positive}]
    Let $\mu = \overline{v} - \underline{v}$. Consider first the contradictory notion that $\negpart{\mu} \neq 0$. Then Assumption \ref{ass:assumptions on b} $(ii)$ yields that
    \begin{equation}
        \eta^\top \left( b(\negpart{\mu} + \underline{v}) - b(v) \right) < 0.
    \end{equation}
    Hence $\exists k \in 1,\dots,n$ such that $b_k(\negpart{\mu} + \underline{v}) - b_k(\underline{v})$, and thus due to \ref{ass:assumptions on b} $(i)$ we can conclude that $\mu_k < 0$. But since $\mu_k < 0$, we can also use Assumption \ref{ass:assumptions on b} $(i)$ to find
    \begin{equation}
        b_k(\mu + \underline{v}) - b_k(\underline{v}) \geq b_k(\negpart{\mu} + \underline{v}) - b_k(\underline{v}) < 0.
    \end{equation}
    This contradicts the assumption of the Lemma, and hence we conclude that $\negpart{\mu} = 0$. Therefore $\mu \geq 0$. Now consider the notion that $\exists$ $j$ such that $\mu_j = 0$. Then if $\mu \neq 0$, Assumption \ref{ass:assumptions on b} $(i)$ states that $b_j(\mu + \underline{v}) - b_j(\underline{v}) < 0$, which contradicts the lemma assumption. Hence either $\mu > 0$ or $\mu = 0$. Clearly $\mu = 0$ is impossible, as this would contradict the assumption of the lemma. Thus $\mu = \overline{v} - \underline{v} > 0$.
     
\end{proof}

\begin{proof}[Proof sketch of Lemma \ref{lem:decentralized equilibrium}.]
    An equilibrium $(x^0,z^0)$ with associated stationary control input $u^0$ satisfies \eqref{eq:decentralized closed loop} with $\dot{x} = \dot{z} = 0$. Note that to solve this system of equations, it is sufficient to find $u^0$ which satisfies
    \begin{equation*}
        0 = b_i(\sat{u^0}) + w_i +a_i \kai \dz{u_i^0}, \, \forall i = 1,\dots,n. \label{eq:fixed point}
    \end{equation*}
    This $u^0$ uniquely fixes $x^0$ through \eqref{eq:decentralized x} and $z^0$ through \eqref{eq:decentralized u}. Equation \eqref{eq:fixed point} equates to solving a fixed-point problem $u^0 = T(u^0)$, where we define the map $T$ as
    \begin{equation*}
        T_i(u^0) = u_i^0 - \alpha\left(b_i(\sat{u^0}) + w_i + a_i \kai \dz{u_i^0} \right) 
    \end{equation*}
    where $\alpha > 0$ is chosen sufficiently small. $T$ is clearly forward-invariant with respect to a sufficiently large box $\mathcal{C}$ of size $c$, $\mathcal{C} = \{ u \in \R^n\, | \, \norm{\infty}{u} \leq c \}$, which allows us to invoke Brouwer's fixed-point theorem.
\end{proof}

\end{document}